\documentclass[11pt]{article}

\usepackage{amsmath, fullpage}
\usepackage{xspace}
\usepackage{amssymb}
\usepackage{epsfig}


\newtheorem{Thm}{Theorem}
\newtheorem{Lem}{Lemma}
\newtheorem{Cor}{Corollary}

\newtheorem{Claim}{Claim}

\newtheorem{Def}{Definition}
\newtheorem{Fact}{Fact}
\newenvironment{proof}{\noindent {\textbf{Proof }}}{$\Box$
\medskip}

\newcommand{\defeq}{\stackrel{\mathsf{def}}{=}}

\newcommand\mbR{\mbox{$\mathbb{R}$}}
\newcommand\mcX{\mathcal{X}}
\newcommand\mcY{\mathcal{Y}}
\newcommand\mcZ{\mathcal{Z}}
\newcommand\mcA{\mathcal{A}}

\newcommand\mcP{\mathcal{P}}
\newcommand\mcR{\mathcal{R}}

\newcommand{\alice}{{\mathsf{Alice}}}
\newcommand{\bob}{{\mathsf{Bob}}}
\newcommand{\prt}{{\mathsf{prt}}}
\newcommand{\rec}{{\mathsf{rec}}}
\newcommand{\trec}{{\widetilde{\mathsf{rec}}}}
\newcommand{\srec}{{\mathsf{srec}}}
\newcommand{\tsrec}{{\widetilde{\mathsf{srec}}}}
\newcommand{\disc}{{\mathsf{disc}}}
\newcommand{\sdisc}{{\mathsf{sdisc}}}
\newcommand{\tsdisc}{{\widetilde{\mathsf{sdisc}}}}
\newcommand{\adv}{{\mathsf{adv}}}
\newcommand{\cadv}{{\mathsf{cadv}}}
\newcommand{\D}{{\mathsf{D}}}
\newcommand{\R}{{\mathsf{R}}}
\newcommand{\C}{{\mathsf{C}}}
\newcommand{\s}{{\mathsf{s}}}
\newcommand{\RC}{{\mathsf{RC}}}
\newcommand{\bs}{{\mathsf{bs}}}
\newcommand{\pub}{{\mathsf{pub}}}
\newcommand{\tdeg}{{\widetilde{\mathsf{deg}}}}
\newcommand{\edeg}{{{\mathsf{deg}}}}

\newcommand\disjointness{{\mathsf{Disjointness}}}
\newcommand\disj{{\mathsf{Disj}}}
\newcommand\LNE{\mathsf{LNE}}
\newcommand\tribes{{\mathsf{Tribes}}}
\newcommand\rank{{\mathsf{rank}}}
\newcommand\poly{{\mathsf{poly}}}

\mathchardef\mhyphen="2D
\newcommand{\ip}[2]{\left\langle #1 , #2\right\rangle}
\newcommand{\signorm}[1]{||#1||_\Sigma}

\newcommand{\suppress}[1]{}
\newcommand\COMMENT[1]{}

\newcommand\norm[1]{\| #1 \|}
\newcommand\abs[1]{| #1 |}

\bibliographystyle{alpha}
\pagenumbering{arabic}
\pagestyle{plain}

\begin{document}
\title{The Partition Bound for Classical Communication Complexity and Query Complexity}
\author{Rahul Jain\thanks{Centre for Quantum Technologies and Department of Computer Science, National University of Singapore. Email: {\tt rahul@comp.nus.edu.sg}} \quad Hartmut Klauck\thanks{Centre for Quantum Technologies, National University of Singapore. Email: {\tt hklauck@gmail.com}} }
\date{}
\maketitle

\begin{abstract}
We describe new lower bounds for randomized communication
complexity and query complexity which we call the {\em partition
bounds}. They are expressed as the optimum value of linear programs. For communication complexity we
show that the partition bound is stronger than both the
{\em rectangle/corruption bound} and the {\em $\gamma_2$/generalized discrepancy
bounds}. In the model of query complexity we show that the partition
bound is stronger than the {\em approximate
polynomial degree} and {\em classical adversary bounds}. We also exhibit an
example where the partition bound is quadratically larger than the approximate polynomial degree and adversary bounds.
\end{abstract}


\section{Introduction}

The computational models investigated in communication complexity and
query complexity, i.e., Yao's communication model \cite{yao:cc} and
the decision tree model, are simple enough to allow us to prove
interesting lower bounds, yet they are rich enough to have numerous
applications to other models as well as exhibit nontrivial structure. Research
in both these models is concentrated on lower bounds and a recurring theme
is methods to prove such bounds. In this paper we present a new method
for proving lower bounds on randomized complexity in both of these models.

\subsection{Communication Complexity}

In the model of communication complexity there are several general
methods to prove lower bounds in the settings of randomized
communication and quantum communication. Linial and Shraibman
\cite{linial:norms} identified a quantity called $\gamma_2$, which not
only yields lower bounds for quantum protocols, but also subsumes a
good number of previously known bounds. Later, Sherstov
\cite{sherstov:pattern} described a quantity called generalized
discrepancy (the name being coined in \cite{chatto:multiparty}), which
also coincides with $\gamma_2$. The generalized discrepancy can be
derived from the standard discrepancy bound (see
\cite{kushilevitz&nisan:cc}, this bound was shown to be applicable in
the quantum case by Kremer and Yao \cite{kremer:thesis}) in a way
originally suggested by Klauck \cite{klauck:lbqcc}. In particular,
Sherstov showed that the $\gamma_2$ method yields a tight
$\Omega(\sqrt n)$ bound for the quantum communication complexity of
the $\disjointness$ problem, arguably the most important single function
considered in the area. This result was previously established by a
more complicated method \cite{razborov:qdisj}, for a matching upper
bound see \cite{aaronson:search}. This leaves our knowledge of lower
bound methods in the world of quantum communication complexity in a
neat form, where there is one "master method" that seems to do better
than everything else; the only potential competition coming from {\em information
theoretic techniques} like in \cite{jayram:app,jain:info}, which are
not applicable to all problems, and not known to beat $\gamma_2$
either.

In the world of randomized communication things appear to be much less
organized. Besides simply applying $\gamma_2$, the main competitors
are the rectangle (aka corruption) bound (compare
\cite{yao:prob, bfs:classes,razborov:disj, klauck:thresh, beame:sdpt}), as well
as again information theoretic techniques. Both of the latter
approaches are able to beat $\gamma_2$, by allowing $\Omega(n)$ bounds
for the $\disjointness$ problem \cite{razborov:qdisj, bar-yossef:disj,  ks:disj},
and there is an information theoretic proof of a tight $\Omega(n)$
lower bound for the $\tribes$ function (an AND of $\sqrt n$ ORs of $\sqrt
n$ ANDs of distributed pairs of variables \cite{jayram:app}). With the
rectangle bound one cannot prove a lower bound larger than $\sqrt n$
for this problem, and neither with $\gamma_2$. So the two general
techniques, rectangle bound and $\gamma_2$, are known to be
quadratically smaller than the randomized communication complexity for
some problems, and the information theoretic approach seems to be only
applicable to problems of a "direct sum" type.

In this paper we propose a new lower bound method for randomized
communication complexity which we call the partition bound \footnote{In this paper we are only concerned with the two-party model and the partition bound for other models can be defined analogously. For example for the {\em Number on the Forehead Model} it can be defined by replacing rectangles by {\em cylinder intersections}.}. We derive this
bound from a linear program, which captures a relaxation of the
fact that a randomized protocol is a convex combination of
deterministic protocols and hence a convex combination of partitions
of the communication matrix into rectangles. Linear programs have been
previously used to describe lower bounds in communication complexity. Lovasz \cite{lovasz:cc} gives a program which, as we show, turns out to capture the rectangle bound. Our program for the partition bound however uses stricter constraints to overcome the one-sidedness of Lovasz's program. Karchmer et al.~\cite{karchmer:fractional} give a linear program for {\em fractional covers}, as well as a linear program which can be seen to be equivalent to our zero-error partition bound for relations, where it was introduced as a lower bound to deterministic complexity.

We also describe a weaker bound to the partition bound which we call
the "smooth rectangle bound". It is inherently a one-sided bound
and is derived by relaxing constraints in the linear program for the partition bound.  This bound
has recently been used to prove a {\em strong direct product theorem}
for $\disjointness$ in \cite{klauck:disjointness}. Another way to derive
the smooth rectangle bound is as follows. Suppose we want to prove a
lower bound for a function $f$. Then we could apply the rectangle
bound, but sometimes this might not yield a large enough lower
bound. Instead we apply the rectangle bound to a function $g$ that is
sufficiently close to $f$ (under a suitable probability distribution), so that lower bounds for $g$ imply lower
bounds for $f$. Maximizing this over
all $g$, close to $f$, gives us the smooth rectangle bound. This is the same approach
that turns the discrepancy bound into the generalized discrepancy (see
\cite{sherstov:pattern,klauck:lbqcc}). We will use the term smooth
discrepancy in the following, because it better captures the
underlying approach.

After defining the partition bound and the smooth rectangle bound we
proceed to show that the smooth rectangle bound subsumes both the standard
rectangle bound and $\gamma_2$/smooth discrepancy. We also show that
the LP formulation of the smooth rectangle bound coincides with its
natural definition as described above. This leaves us with one unified
general lower bound method for randomized communication complexity,
the partition bound.

\subsection{Query Complexity}
We then turn to randomized query complexity. Again there are several prominent lower bound
methods in this area. Some of the main methods are the classical version of Ambainis' adversary
method (the quantum version is from \cite{ambainis:adversary}, and
classical versions are  by Laplante/Magniez \cite{laplante:kolmogorov}
and Aaronson \cite{aaronson:local});  the approximate polynomial degree
\cite{nisan&szegedy:degree,bbcmw:polynomials}; the randomized
certificate bound defined by Aaronson \cite{aaronson:cert} (this being
the query complexity analogue of the rectangle bound in communication
complexity), as well as older methods like block-sensitivity
\cite{nisan:pram}.

We again propose a new lower bound, the partition bound, defined via a
linear program, this time based on the fact that a decision tree partitions the Boolean
cube into subcubes. We then proceed to show that our lower bound
method subsumes all the other bounds mentioned above. In particular
the partition bound is always larger than the classical adversary
bound, the approximate degree, and block-sensitivity.

To further illustrate the power of our approach we describe a Boolean
function, (AND of ORs), which we continue to call  $\tribes$, for which the partition
bound yields a tight linear lower bound, while both the adversary
bound and the approximate degree are at least quadratically
smaller.

\section{Communication Complexity Bounds}
In this section we present the definition of the partition bound and the smooth-rectangle bound followed by the definitions of the previously known lower bounds for randomized communication complexity. Subsequently, in the next subsection, we present key relationships and comparisons between various bounds.
\subsection{Definitions}
Let $f : \mcX \times \mcY \rightarrow \mcZ$ be a partial function. All
the functions considered in this section are partial functions unless
otherwise specified, hence we will drop the term partial henceforth.
It is easily verified that strong duality holds for the programs that
appear below and hence optima for the primal and dual are same. Let
$\mcR$ be the set of all rectangles in $\mcX \times \mcY$. We refer
the reader to~\cite{kushilevitz&nisan:cc} for introduction to basic terms in
communication complexity. Below we
assume $(x,y) \in \mcX \times \mcY, R \in \mcR, z \in \mcZ$, unless
otherwise specified. Let $f^{-1} \subseteq \mcX\times \mcY$ denote the subset where $f(\cdot)$ is defined. For sets $A,B$ we denote $A-B \defeq \{a: a\in A, a \notin B\}$. We assume $\epsilon \geq 0 $ unless otherwise specified.
\subsubsection{New Bounds}
\begin{Def}[Partition Bound]
The $\epsilon$-partition bound of $f$, denoted $\prt_\epsilon(f)$, is given by the optimal value of the following linear program.

\vspace{0.2in}

{\footnotesize
\begin{minipage}{3in}
    \centerline{ \underline{Primal}}\vspace{-0.2in}
 \begin{align*}
      \text{min:}\quad & \sum_{z} \sum_{R}  w_{z,R} \\
      \quad & \forall (x,y) \in f^{-1}: \sum_{R: (x,y) \in R} w_{f(x,y),R} \geq 1 - \epsilon,\\
      & \forall (x,y) : \sum_{R: (x,y) \in R} \quad \sum_{z}  w_{z,R} = 1 , \\
      & \forall z , \forall R  : w_{z,R} \geq 0 \enspace .
    \end{align*}
\end{minipage}
\begin{minipage}{3in}\vspace{-0.35in}
    \centerline{\underline{Dual}}\vspace{-0.2in}
 \begin{align*}
      \text{max:}\quad &  \sum_{(x,y)\in f^{-1}} (1-\epsilon) \mu_{x,y} + \sum_{(x,y)}\phi_{x,y}\\
      \quad &  \forall z , \forall R : \sum_{(x,y)\in f^{-1}(z)\cap R} \mu_{x,y}   + \sum_{(x,y)\in  R} \phi_{x,y}  \leq 1,\\
      & \forall (x,y) : \mu_{x,y} \geq 0, \phi_{x,y} \in \mbR \enspace .
 \end{align*}
\end{minipage}
}
\end{Def}

Below we present the definition of smooth-rectangle bound as a one-sided relaxation of the partition bound. As we show in the next subsection, it is upper bounded by the partition bound.

\begin{Def}[Smooth-Rectangle bound]
The $\epsilon$- smooth rectangle bound of $f$ denoted $\srec_\epsilon(f)$ is defined to be $\max\{\srec^z_\epsilon(f): z\in\mcZ\}$, where $\srec^z_\epsilon(f)$ is given by the optimal value of the following linear program.

\vspace{0.2in}

{\footnotesize
\hspace{-0.4in}\begin{minipage}{3in}
    \centerline{\underline{Primal}}
    \begin{align*}
      & \text{min:}\quad  \sum_{R \in \mcR}  w_{R} \\
       \quad &  \forall (x,y) \in  f^{-1}(z): \sum_{R: (x,y) \in R} w_{R} \geq 1 - \epsilon,\\
      & \forall (x,y) \in  f^{-1}(z): \sum_{R: (x,y) \in R} w_{R} \leq 1,\\
      &  \forall (x,y) \in  f^{-1} - f^{-1}(z): \sum_{R: (x,y) \in R} w_{R} \leq \epsilon,\\
      & \forall R : w_{R} \geq 0 \enspace .
    \end{align*}
\end{minipage}
\begin{minipage}{3in}\vspace{-0.7in}
    \centerline{\underline{Dual}}
    \begin{align*}
      & \text{max:}\quad   \sum_{(x,y)\in f^{-1}(z)} \left((1-\epsilon) \mu_{x,y} - \phi_{x,y} \right)- \sum_{(x,y)\in f^{-1} - f^{-1}(z)} \epsilon \cdot \mu_{x,y}\\
       \quad &   \forall R : \sum_{(x,y)\in f^{-1}(z)\cap R} (\mu_{x,y} - \phi_{x,y}) - \sum_{(x,y)\in (R \cap f^{-1}) - f^{-1}(z)} \mu_{x,y} \leq 1,\\
      & \forall (x,y) : \mu_{x,y} \geq 0 ; \phi_{x,y} \geq 0 \enspace .
    \end{align*}
\end{minipage}
}
\end{Def}

Below we present an alternate and "natural" definition of smooth-rectangle bound, which justifies its name. In the next subsection we show that the two definitions are equivalent.

\begin{Def}[Smooth-Rectangle bound : Natural definition] In the natural definition, $(\epsilon,\delta)$- smooth-rectangle bound of $f$,  denoted $\tsrec_{\epsilon,\delta}(f) $, is defined as follows (refer to the definition of $\trec^{z,\lambda}_\epsilon(g)$ in the next subsection):
\begin{align*}
& \tsrec_{\epsilon, \delta}(f)  \defeq \max\{ \tsrec^z_{\epsilon, \delta}(f) : z\in \mcZ\}.  \\
& \tsrec^z_{\epsilon,\delta}(f) \defeq \max\{\tsrec^{z,\lambda}_{\epsilon, \delta}(f) : \lambda \mbox{ a (probability) distribution on } \mcX \times \mcY \cap f^{-1} \}.\\
& \tsrec^{z,\lambda}_{\epsilon,\delta}(f) \defeq \max\{ \trec^{z,\lambda}_\epsilon(g) : g: \mcX \times \mcY \rightarrow \mcZ; \Pr_{(x,y) \leftarrow \lambda}[f(x,y) \neq g(x,y)] < \delta;  \lambda(g^{-1}(z)) \geq 0.5\} .
\end{align*}
\end{Def}
Below we  define smooth-discrepancy via a linear program. In the next subsection we present the natural definition of smooth-discrepancy and in the subsequent subsection we show that the two definitions are equivalent. As we also show in the next subsection smooth-discrepancy is upper bounded by smooth-rectangle bound which in turn is upper bounded by the partition bound.
\begin{Def}[Smooth-Discrepancy] Let $f :\mcX \times \mcY \rightarrow \{0,1\}$ be a Boolean function. The smooth-discrepancy of $f$, denoted $\sdisc_\epsilon(f)$, is given by the optimal value of the following linear program.

\vspace{0.2in}

{\footnotesize
\hspace{-0.2in}\begin{minipage}{3in}
    \centerline{\underline{Primal}}
    \begin{align*}
      & \text{min:}\quad  \sum_{R \in \mcR}  w_{R} + v_R\\
       \quad &  \forall (x,y) \in  f^{-1}(1): \quad 1+ \epsilon \geq \sum_{R: (x,y) \in R} w_{R}  - v_R \geq 1 ,\\
      & \forall (x,y) \in  f^{-1}(0): \quad  1+ \epsilon \geq \sum_{R: (x,y) \in R} v_R - w_{R} \geq 1,\\
      & \forall R : w_{R}, v_R \geq 0 \enspace .
    \end{align*}
\end{minipage}
\begin{minipage}{3in}
    \centerline{\underline{Dual}}
    \begin{align*}
      & \text{max:}\quad   \sum_{(x,y)\in f^{-1}}  \mu_{x,y}  - (1+\epsilon)  \phi_{x,y}\\
       \quad &   \forall R : \sum_{(x,y)\in f^{-1}(1)\cap R} (\mu_{x,y} - \phi_{x,y}) - \sum_{(x,y)\in R \cap f^{-1}(0) } (\mu_{x,y} - \phi_{x,y}) \leq 1,\\
      &   \forall R : \sum_{(x,y)\in f^{-1}(0)\cap R} (\mu_{x,y} - \phi_{x,y}) - \sum_{(x,y)\in R \cap f^{-1}(1) } (\mu_{x,y} - \phi_{x,y}) \leq 1,\\
      & \forall (x,y) : \mu_{x,y} \geq 0 ; \phi_{x,y} \geq 0 \enspace .
    \end{align*}
\end{minipage}
}
\end{Def}

\subsubsection{Known Bounds}
Below we present the definition of the rectangle bound via a linear program. This program was first described by Lovasz~\cite{lovasz:cc} although he did not make the connection to the rectangle bound.
\begin{Def}[Rectangle-Bound]
The $\epsilon$-rectangle bound of $f$, denoted $\rec_\epsilon(f)$, is defined to be $\max\{\rec^z_\epsilon(f): z\in\mcZ\}$, where $\rec^z_\epsilon(f)$ is given by the optimal value of the following linear program.

\vspace{0.2in}

{\footnotesize
\hspace{-0.2in}\begin{minipage}{3in}
    \centerline{\underline{Primal}}
    \begin{align*}
      \text{min:}\quad & \sum_{R}  w_{R} \\
       \quad &  \forall (x,y) \in f^{-1}(z): \sum_{R: (x,y) \in R} w_{R} \geq 1 - \epsilon,\\
      &  \forall (x,y) \in  f^{-1} - f^{-1}(z): \sum_{R: (x,y) \in R} w_{R} \leq \epsilon,\\
      & \forall R  : w_{R} \geq 0 \enspace .
    \end{align*}
\end{minipage}
\begin{minipage}{3in}\vspace{-0.35in}
    \centerline{\underline{Dual}}
    \begin{align*}
      \text{max:}\quad &  \sum_{(x,y) \in f^{-1}(z)}  (1-\epsilon) \cdot  \mu_{x,y}  - \sum_{(x,y) \in f^{-1} - f^{-1}(z)} \epsilon \cdot \mu_{x,y}\\
       \quad &   \forall R : \sum_{(x,y)\in f^{-1}(z)\cap R} \mu_{x,y}  - \sum_{(x,y)\in (R \cap f^{-1}) - f^{-1}(z)} \mu_{x,y} \leq 1,\\
      & \forall (x,y) : \mu_{x,y} \geq 0 \enspace .
    \end{align*}
\end{minipage}
}
\end{Def}

Below we present the alternate, natural and conventional definition of rectangle bound as used in several previous works~\cite{yao:prob, bfs:classes,razborov:disj, klauck:thresh,  beame:sdpt}. In the next subsection we show that the two definitions are equivalent.
\begin{Def}[Rectangle-Bound: Conventional definition] In the conventional definition, $\epsilon$-rectangle bound of $f$, denoted $\trec_\epsilon(f)$  is defined as follows:
\begin{align*}
& \trec_\epsilon(f) \defeq \max\{\trec^z_\epsilon(f) : z\in\mcZ\} \\
& \trec^z_\epsilon(f) \defeq \max \{ \trec^{z,\lambda}_\epsilon(f) : \lambda \mbox{ a distribution on } \mcX \times \mcY \cap f^{-1} \mbox{ with } \lambda(f^{-1}(z)) \geq 0.5 \}. \\
& \trec^{z,\lambda}_\epsilon(f) \defeq \min\{ \frac{1}{\lambda(f^{-1}(z)\cap R )} : R \in \mcR \mbox{ with } \epsilon \cdot \lambda(f^{-1}(z)\cap R)  >  \lambda(R - f^{-1}(z)) \} \enspace .
\end{align*}
\end{Def}
Below we present the definition of discrepancy via a linear program followed by the conventional definition of discrepancy. It is easily seen that the two are exactly the same.
\begin{Def}[Discrepancy] Let $f :\mcX \times \mcY \rightarrow \{0,1\}$ be a Boolean function. The discrepancy of $f$, denoted $\disc(f)$, is given by the optimal value of the following linear program.

\vspace{0.2in}

{\footnotesize
\begin{minipage}{3in}
    \centerline{\underline{Primal}}\vspace{-0.2in}
    \begin{align*}
      \text{min:}\quad & \sum_{R}  w_{R} + v_R\\
       \quad &  \forall (x,y) \in  f^{-1}(1): \sum_{R: (x,y) \in R} w_{R}  - v_R \geq 1 ,\\
      & \forall (x,y) \in  f^{-1}(0): \sum_{R: (x,y) \in R} v_R - w_{R} \geq 1,\\
      & \forall R : w_{R}, v_R \geq 0 \enspace .
    \end{align*}
\end{minipage}
\begin{minipage}{3in}
    \centerline{\underline{Dual}}\vspace{-0.2in}
    \begin{align*}
      \text{max:}\quad &  \sum_{(x,y)\in f^{-1}} \mu_{x,y}  \\
       \quad &   \forall R : \sum_{(x,y)\in f^{-1}(1)\cap R} \mu_{x,y}  - \sum_{(x,y)\in R \cap f^{-1}(0) } \mu_{x,y} \leq 1,\\
      &   \forall R : \sum_{(x,y)\in f^{-1}(0)\cap R} \mu_{x,y}  - \sum_{(x,y)\in R \cap f^{-1}(1) } \mu_{x,y} \leq 1,\\
      & \forall (x,y) : \mu_{x,y} \geq 0\enspace .
    \end{align*}
\end{minipage}
}
\end{Def}

\begin{Def}[Discrepancy: Conventional definition] Let $f : \mcX \times \mcY \rightarrow \{0,1\}$ be a Boolean function. The discrepancy of $f$, denoted $\disc(f)$  is defined as follows:
\begin{align*}
& \disc(f) \defeq \max \{ \disc^{\lambda}(f) : \lambda \mbox{ a distribution on } \mcX \times \mcY \cap f^{-1}\}. \\
& \disc^{\lambda}(f) \defeq \min\{ \frac{1}{|\sum_{(x,y) \in R} (-1)^{f(x,y)} \cdot \lambda_{x,y}|} : R \in \mcR  \} \enspace .
\end{align*}
\end{Def}
Below we present the natural definition of smooth-discrepancy which has found shape in previous works~\cite{klauck:lbqcc, sherstov:pattern}. It is defined in analogous fashion from discrepancy as smooth-rectangle bound is defined from rectangle bound.
\begin{Def}[Smooth-Discrepancy: Natural Definition] Let $f : \mcX \times \mcY \rightarrow \{0,1\}$ be a Boolean function. The $\delta$- smooth-discrepancy of $f$,  denoted $\tsdisc_{\delta}(f) $, is defined as follows:
\begin{align*}
& \tsdisc_{\delta}(f) \defeq \max\{\tsdisc^{\lambda}_{\delta}(f) : \lambda \mbox{ a distribution on } \mcX \times \mcY \cap f^{-1}\}.\\
& \tsdisc^{\lambda}_{\delta}(f) \defeq \max\{\disc^{\lambda}(g) : g: \mcX \times \mcY \rightarrow \mcZ; \Pr_{(x,y) \leftarrow \lambda}[f(x,y) \neq g(x,y)] < \delta  \} .
\end{align*}
\end{Def}
Below we define the $\gamma_2$ bound of Linial and Shraibman~\cite{linial:norms} and show in the next subsection that it is equivalent to smooth-discrepancy.
\begin{Def}[$\gamma_2$ bound~\cite{linial:norms}] Let A be a sign matrix and $\alpha \geq 1$. Then,
$$ \gamma_2(A) \defeq \min_{X,Y : XY = A} r(X)c(Y) \quad ; \quad \gamma_2^\alpha(A) \defeq \min_{B: \forall (i,j) \; 1 \leq A(i,j)B(i,j) \leq \alpha} \gamma_2(B) .$$
Above $r(X)$ represents the largest $\ell_2$ norm of the rows of $X$ and $c(X)$ represents the largest $\ell_2$ norm of the columns of $Y$.
\end{Def}

Below we present two well-known lower bound methods for deterministic communication complexity.
\begin{Def}[$\log$-rank bound] Let $f: \mcX \times \mcY \rightarrow \mcZ$ be a total function. Let $M_f$ denote the communication matrix associated with $f$;  $\D(f)$ denote the deterministic communication complexity of $f$ and $\rank(\cdot)$ represents the rank over the reals. Then it is well known~\cite{kushilevitz&nisan:cc} that $\D(f) \geq \log_2 \rank(f)$.
\end{Def}
\begin{Def}[Fooling Set]Let $f: \mcX \times \mcY \rightarrow \mcZ$ be a total function. A set $S \subseteq \mcX \times \mcY$ is called a {\em fooling set} (for $f$) if there exists a value $z \in \mcZ$, such that
\begin{itemize}
\item For every $(x,y) \in S, f(x,y) = z$.
\item For every two distinct pairs $(x_1,y_1)$ and $(x_2,y_2)$ in $S$, either $f(x_1,y_2) \neq z$ or $f(x_2,y_1) \neq z$.
\end{itemize}
It is easily argued that $\D(f) \geq \log_2 |S|$~\cite{kushilevitz&nisan:cc}.
\end{Def}

\subsection{Comparison between bounds}
The following theorem captures key relationships between the bounds defined in the previous section. Below $\R^\pub_\epsilon(f)$ denotes the public-coin, $\epsilon$-error communication complexity of $f$.
\begin{Thm}Let $f : \mcX \times \mcY \rightarrow \mcZ$ be a function.
\begin{enumerate}
\item $\R^\pub_\epsilon(f) \geq \log \prt_\epsilon(f)$.
\item $ \prt_\epsilon(f) \geq  \srec_\epsilon(f)$.
\item $ \srec_\epsilon(f) \geq \rec_\epsilon(f)$.
\item Let $f: \mcX \times \mcY \rightarrow \mcZ$ be a total function, then $ \D(f) = O((\log \prt_0(f) + \log n)^2) $. Later we exhibit that the quadratic gap between $\D$ and $\log \prt_0$ is tight. For relations however there could be an exponential gap between $\log\prt_0$ and $\D$ as shown in~\cite{karchmer:fractional}.
\item Let $f: \mcX \times \mcY \rightarrow \mcZ$ be a total function, and let $S \subseteq \mcX \times \mcY$ be a fooling set. Then $ \prt_0(f) \geq |S| $.
\end{enumerate} \label{thm:main}
\end{Thm}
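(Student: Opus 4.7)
The plan is to handle the five parts in roughly increasing order of difficulty, using direct LP manipulations for (1)--(3), a primal argument for (5), and LP rounding combined with an Aho--Ullman--Yannakakis--style simulation for (4).

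For (1), I would start from a public-coin $\epsilon$-error protocol of cost $c$. Fixing each random string yields a deterministic protocol whose leaves partition $\mcX\times\mcY$ into at most $2^c$ rectangles, each labelled by the output the protocol returns. Setting $w_{z,R}$ equal to the probability that the public coin produces the deterministic protocol with leaf $R$ labelled $z$ gives a primal feasible solution to $\prt_\epsilon$ with objective at most $2^c$: the equality $\sum_{R\ni(x,y)}\sum_z w_{z,R}=1$ is forced by the partition structure, and the $\ge 1-\epsilon$ constraint follows from $\epsilon$-correctness. Parts (2) and (3) are direct LP arguments. For (2), project a feasible $\{w_{z',R}\}$ for $\prt_\epsilon$ onto $w_R:=w_{z,R}$: the partition equality gives $\sum_{R\ni(x,y)}w_R\le 1$ on all of $f^{-1}$, and subtracting the $\ge 1-\epsilon$ constraint at an $(x,y)\in f^{-1}(z')$ with $z'\ne z$ yields $\sum_{R\ni(x,y)}w_R\le \epsilon$. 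Part (3) is by inspection: $\srec^z_\epsilon$ is $\rec^z_\epsilon$ with one extra upper-bound constraint on $f^{-1}(z)$, and adding constraints in a minimization LP can only raise the optimum.

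For (5) I would work in the primal. At $\epsilon=0$ the equality constraint combined with $\sum_{R\ni(x,y)}w_{f(x,y),R}\ge 1$ forces $w_{z',R}=0$ whenever $R$ contains some $(x,y)\in f^{-1}$ with $f(x,y)\ne z'$, so every rectangle in the support of $w_{z,\cdot}$ must be $z$-monochromatic on $f^{-1}$. Since $f$ is total, the fooling-set property then rules out two distinct $(x_1,y_1),(x_2,y_2)\in S$ lying in such an $R$: the completions $(x_1,y_2),(x_2,y_1)$ would also be in $R$ and would be forced to evaluate to $z$, contradicting the definition of $S$. Summing $\sum_{R\ni(x,y)}w_{z,R}\ge 1$ over $(x,y)\in S$ gives $\sum_R |R\cap S|\,w_{z,R}\ge |S|$, and $|R\cap S|\le 1$ on the support yields $\prt_0(f)\ge \sum_R w_{z,R}\ge |S|$.

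Part (4) is the main obstacle. The easy direction $\log\prt_0(f)\le \D(f)$ follows from the $0/1$ feasible point given by the monochromatic rectangle partition of an optimal deterministic protocol. For the reverse I would proceed in two stages. First, round the optimal fractional partition to an integer cover by monochromatic rectangles: by the support analysis used in (5), every rectangle in the support of an optimal $\prt_0$ solution is already monochromatic on $f^{-1}$, and standard set-cover rounding (greedy or randomised) then produces a monochromatic cover of each colour class of size $O(\prt_0(f)\cdot\log N)$ with $N=|\mcX||\mcY|$, giving $\log C^z(f)=O(\log\prt_0(f)+\log n)$. Second, feed these small one-sided covers into the Aho--Ullman--Yannakakis simulation to obtain $\D(f)=O(\log C^0(f)\cdot\log C^1(f))=O((\log\prt_0(f)+\log n)^2)$. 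The delicate step, and the one I expect to have to verify most carefully, is the rounding: one has to check that the mixture of $\ge$ and $=$ constraints in the partition LP really does induce a set-cover instance for each colour whose fractional optimum is bounded by $\prt_0(f)$, so that the logarithmic rounding overhead suffices.
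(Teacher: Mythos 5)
Your treatment of parts (1)--(4) follows the paper's route closely: for (1) you encode each fixed-randomness deterministic protocol as a weighted rectangle partition and read off feasibility and the $2^c$ bound; for (2) you project the $\prt_\epsilon$ solution onto a single output $z$ and derive the three smooth-rectangle constraints from the partition equality together with the $\geq 1-\epsilon$ constraints; for (3) you note that $\srec^z_\epsilon$ is $\rec^z_\epsilon$ with one added constraint; and for (4) you combine the $\epsilon=0$ support observation (every supported rectangle is monochromatic), randomized rounding to get covers of size $O(\prt_0(f)\cdot\log(|\mcX||\mcY|))$, and the Aho--Ullman--Yannakakis simulation, which is exactly the paper's sketch via Theorem 2.11 of Kushilevitz--Nisan. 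The concern you flag in (4) about mixing $\geq$ and $=$ constraints is resolved precisely by that support observation: once supported rectangles in $w_{z,\cdot}$ are $z$-monochromatic, the $\geq 1$ constraints on $f^{-1}(z)$ become a fractional cover of $f^{-1}(z)$ of total weight at most $\prt_0(f)$, so the logarithmic rounding overhead is all that is needed. The one place you genuinely diverge is (5): you argue in the primal (supported rectangles are monochromatic, hence by the fooling-set property $|R\cap S|\le 1$ on the support, and summing the $\geq 1$ constraints over $S$ gives $\prt_0(f)\ge |S|$), whereas the paper argues in the dual by setting $\mu_{x,y}=1$ on $S$ and $\phi\equiv 0$. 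The paper's justification that no two elements of $S$ share a rectangle is only valid for monochromatic rectangles, so its dual argument implicitly relies on restricting to the monochromatic $(z,R)$ pairs --- the same support fact your primal argument states explicitly. Both approaches are correct; yours is the more self-contained.
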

\begin{proof}
\begin{enumerate}
\item 
Let $\mcP$ be a public coin randomized protocol for $f$ with communication $c \defeq \R^\pub_\epsilon(f)$ and worst case error $\epsilon$. For binary string $r$, let $\mcP_r$ represent the deterministic protocol obtained from $\mcP$ on fixing the public coins to $r$. Let $r$ occur with probability $q(r)$ in $\mcP$. Every deterministic protocol amounts to partitioning the inputs in $\mcX\times\mcY$ into rectangles. Let $\mcR_r$ be the set of rectangles corresponding to different communication strings between $\alice$ and $\bob$ in $\mcP_r$. We know that $\abs{\mcR_r} \leq 2^c$, since the communication in $\mcP_r$ is at most $c$ bits. Let $z^r_R \in \mcZ$ be the output corresponding to rectangle  $R$ in $\mcP_r$. Let
$$w'_{z,R} \defeq  \sum_{r : R \in \mcR_r \mbox{ and } z^r_R = z} q(r) \enspace .$$
It is easily seen that for all $(x,y,z) \in \mcX \times\mcY\times\mcZ$:
$$\Pr[\mcP \mbox{ outputs } z \mbox{ on input } (x,y)] = \sum_{R: (x,y)\in R} w'_{z,R} \enspace .$$
Since the protocol has error at most $\epsilon$ on all inputs in $f^{-1}$ we get the constraints:
$$\forall (x,y) \in  f^{-1} : \sum_{R: (x,y) \in R} w'_{f(x),R} \geq 1 - \epsilon \enspace .$$
Also since the $\Pr[\mcP \mbox{ outputs some } z\in\mcZ \mbox{ on input } (x,y)]= 1$, we get the constraints:
$$\forall (x,y) :  \sum_{z}  \sum_{R: (x,y) \in R} w'_{z,R} = 1 \enspace .$$
Of course we also have by construction :
$\forall z, \forall R  : w'_{z,R} \geq 0$.
Therefore  $\{w'_{z,R} : z\in \mcZ, R\in \mcR\}$ is feasible for the primal of $\prt_\epsilon(f)$. Hence,
$$\prt_\epsilon(f) \leq \sum_{z} \sum_{R }  w'_{z,R} = \sum_r q(r) \cdot |\mcR_r| \leq 2^c \sum_r q(r) = 2^c \enspace .$$

\item  Fix $z' \in \mcZ$. We will show that $\srec^{z'}_\epsilon(f) \leq \prt_\epsilon(f)$; this will imply $\srec_\epsilon(f) \leq \prt_\epsilon(f)$.  Let $\{w_{z,R}: z\in \mcZ, R \in \mcR\}$ be an optimal solution of the primal for $ \prt_\epsilon(f)$. Let us define $\forall R \in \mcR :  w_R \defeq w_{z',R}$, hence $\forall R \in \mcR, w_R \geq 0$. Now,
\begin{align*}
 \forall (x,y) \in f^{-1}(z') : \sum_{R: (x,y) \in R} w_{z',R} \geq 1 - \epsilon &\quad \Rightarrow \quad  \sum_{R: (x,y) \in R} w_{R} \geq 1 - \epsilon ,\\
 \forall (x,y) \in f^{-1} - f^{-1}(z') :  \sum_{R: (x,y) \in R} w_{f(x,y),R} \geq 1 - \epsilon  &\quad  \Rightarrow  \quad  \sum_{R: (x,y) \in R} w_{R} \leq \epsilon ,\\
\forall (x,y) : \sum_{R: (x,y) \in R} \quad \sum_{z }  w_{z,R} = 1 &\quad  \Rightarrow  \quad  \sum_{R: (x,y) \in R} w_{R} \leq 1 \enspace .
\end{align*}
Hence $\{w_R : R\in \mcR\}$ forms a feasible solution to the primal for $\srec^z_\epsilon(f)$ which implies
$$\srec^z_\epsilon(f) \leq \sum_{R} w_R \leq \sum_{z} \sum_{R }  w_{z,R}  = \prt_\epsilon(f) \enspace .$$

\item  Fix $z \in \mcZ$. Since the primal program for $\srec^z_\epsilon(f)$ has extra constraints over the primal program for $\rec^z_\epsilon(f)$, it implies that $\rec^z_\epsilon(f) \leq \srec^z_\epsilon(f)$. Hence $\rec_\epsilon(f) \leq \srec_\epsilon(f)$.

\item (Sketch) Let $W \defeq \{w_{z,R}\}$ be an optimal solution to the primal for $\prt_0{f}$. It is easily seen that
$$w_{z,R} > 0 \Rightarrow ((x,y)\in R \Rightarrow f(x,y)=z) \enspace .$$
 Using standard Chernoff type arguments we can argue that there exists subset $W' \subseteq W$ with $|W'| = O(n \prt_0{f})$ such that :
$$\forall (x,y) \in f^{-1}: \sum_{R: (x,y) \in R, w_{f(x,y),R} \in W'} w_{f(x,y),R} > 0 \enspace .$$
Hence $W'$ is a cover of $\mcX \times \mcY$ using monochromatic rectangles. Now using arguments as in Theorem 2.11 of~\cite{kushilevitz&nisan:cc} it follows that $\D(f) = O((\log \prt_0{f} + \log n)^2)$.


\item Define $\mu_{x,y} \defeq 1; \phi_{x,y} \defeq 0 $ iff $(x,y) \in S$ and $\mu_{x,y} = \phi_{x,y} \defeq 0 $ otherwise. Since no two elements of $S$ can appear in the same rectangle, it is easily seen that the constraints for the dual of $\prt_0(f)$ are satisfied by $\{\mu_{x,y}, \phi_{x,y}\}$ . Hence $\prt_0(f) \geq \sum_{(x,y)} (\mu_{x,y} - \phi_{x,y}) = |S|$.

\end{enumerate}

\end{proof}

The following lemma shows the equivalence of the two definitions of the rectangle bound.
\begin{Lem} \label{lem:recsim}
Let $f : \mcX \times \mcY \rightarrow \mcZ$ be a function and let $\epsilon > 0$.  Then for all $z\in \mcZ$,
\begin{enumerate}
\item $\rec^z_\epsilon(f) \leq \trec^z_{\frac{\epsilon}{2}}(f)$.
\item $ \rec^z_\epsilon(f)  \geq \frac{1}{2} \cdot (\frac{1}{2} - \epsilon) \cdot \trec^z_{2\epsilon}(f)$.
\end{enumerate}
\end{Lem}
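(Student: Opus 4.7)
Both parts convert between dual LP solutions of $\rec^z_\epsilon(f)$ and ``hard distributions'' witnessing $\trec^z_\cdot(f)$, but use different constructions; Part~1 is the subtler of the two.

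For Part~1, I plan to fix an arbitrary dual feasible $\mu$ for $\rec^z_\epsilon(f)$ (WLOG supported on $f^{-1}$, as values outside do not appear in any constraint or in the objective) and show its objective is at most $T := \trec^z_{\epsilon/2}(f)$. Write $M_z := \sum_{(x,y) \in f^{-1}(z)} \mu_{x,y}$ and $M_{\bar z} := \sum_{(x,y) \in f^{-1} - f^{-1}(z)} \mu_{x,y}$. The boundary cases $M_z = 0$ (objective trivially $\leq 0$) and $M_{\bar z} = 0$ (take $\lambda := \mu/M_z$, apply the definition of $T$ to a witness rectangle, and use the corresponding dual constraint) are immediate. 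The main case has both $M_z, M_{\bar z} > 0$, and the difficulty I expect is that the natural normalization $\mu/(M_z + M_{\bar z})$ need not satisfy the hypothesis $\lambda(f^{-1}(z)) \geq \tfrac{1}{2}$ required by $\trec^z_{\epsilon/2}(f)$. I will bypass this by forming the balanced mixture
\[
\lambda_{x,y} \defeq \begin{cases} \mu_{x,y}/(2 M_z), & (x,y) \in f^{-1}(z), \\ \mu_{x,y}/(2 M_{\bar z}), & (x,y) \in f^{-1} - f^{-1}(z), \end{cases}
\]
a probability distribution with $\lambda(f^{-1}(z)) = \tfrac{1}{2}$, paying a factor of $2$ in the subsequent mass estimates which is absorbed by halving the parameter (from $\epsilon$ to $\epsilon/2$). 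Applying $\trec^{z,\lambda}_{\epsilon/2}(f) \leq T$ yields a rectangle $R^*$ with $(\epsilon/2)\lambda(R^* \cap f^{-1}(z)) > \lambda(R^* - f^{-1}(z))$ and $\lambda(R^* \cap f^{-1}(z)) \geq 1/T$. Translating both statements back to $\mu$ and combining with the dual constraint $\mu(R^* \cap f^{-1}(z)) - \mu(R^* - f^{-1}(z)) \leq 1$ gives, in the sub-case $M_{\bar z} < 2M_z/\epsilon$, the key inequality $M_z \leq T/2 + (\epsilon/2)M_{\bar z}$; substituting into the dual objective $(1-\epsilon)M_z - \epsilon M_{\bar z}$ then yields the bound $(1-\epsilon)T/2 \leq T$. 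In the complementary sub-case $M_{\bar z} \geq 2M_z/\epsilon$, the objective is already nonpositive.

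For Part~2, I plan to construct an explicit dual feasible $\mu$ for $\rec^z_\epsilon(f)$ with objective at least $\tfrac{1}{2}(\tfrac{1}{2} - \epsilon)T$, where $T := \trec^z_{2\epsilon}(f)$. Let $\lambda$ with $\lambda(f^{-1}(z)) \geq \tfrac{1}{2}$ witness $T$; by the contrapositive of the definition of $\trec$, every rectangle $R$ with $\lambda(R \cap f^{-1}(z)) > 1/T$ satisfies $\lambda(R - f^{-1}(z)) \geq 2\epsilon\lambda(R \cap f^{-1}(z))$. Set $\mu_{x,y} := T\lambda_{x,y}$ on $f^{-1}(z)$ and $\mu_{x,y} := (T/(2\epsilon))\lambda_{x,y}$ on $f^{-1} - f^{-1}(z)$. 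Dual feasibility on each rectangle $R$ then follows by a two-case check: either $T\lambda(R \cap f^{-1}(z)) \leq 1$ (so the first term is $\leq 1$ and the second is safely discarded), or the contrapositive corruption bound forces $\mu(R \cap f^{-1}(z)) - \mu(R - f^{-1}(z)) \leq 0$. Evaluating the objective gives $T\left[(3/2 - \epsilon)\lambda(f^{-1}(z)) - 1/2\right] \geq \tfrac{T}{2}(\tfrac{1}{2} - \epsilon)$ using $\lambda(f^{-1}(z)) \geq \tfrac{1}{2}$, which is the desired bound. Part~2 has no analogous obstacle since $\lambda$ is chosen to our liking on the $\trec$ side.
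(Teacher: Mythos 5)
Your proof is correct and uses the paper's constructions exactly: the balanced distribution $\lambda$ (mass $\tfrac{1}{2}$ on each side) for Part~1, and the dual weights $\mu = k\lambda$ on $f^{-1}(z)$, $\mu = \tfrac{k}{2\epsilon}\lambda$ on $f^{-1}-f^{-1}(z)$ for Part~2, with the same two-case feasibility check. The only cosmetic difference is in Part~1's logical direction: you extract an extremal good rectangle from $\trec^{z,\lambda}_{\epsilon/2}(f)\leq T$ and bound the dual objective via a sub-case split on $M_{\bar z}$, while the paper argues contrapositively (every $\epsilon/2$-good rectangle has $\lambda$-mass below $1/k$, hence $\trec^{z,\lambda}_{\epsilon/2}(f)\geq k$), which is slightly shorter and avoids the boundary cases you enumerate.
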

\begin{proof}
\begin{enumerate}
\item Fix $z\in\mcZ$. Let $k \defeq \rec^z_\epsilon(f)$. Let $\{\mu_{x,y} : (x,y) \in \mcX\times\mcY\}$ be an optimal solution to the dual for $\rec^z_\epsilon(f)$. We can assume without loss of generality that $(x,y) \notin f^{-1} \Rightarrow \mu_{x,y} = 0$. Let $k_1 \defeq \sum_{(x,y) \in f^{-1}(z)} \mu_{x,y}$ and $k_2 \defeq \sum_{(x,y)\in f^{-1} - f^{-1}(z)} \mu_{x,y}$.  Then,
\begin{eqnarray}
& & k  =(1-\epsilon)\sum_{(x,y)\in f^{-1}(z)} \mu_{x,y}  - \epsilon \sum_{(x,y) \in f^{-1} - f^{-1}(z)} \mu_{x,y}  \nonumber \\
& \Rightarrow & k = (1-\epsilon) k_1  - \epsilon k_2 \nonumber \\
&\Rightarrow & k_1 \geq k   \mbox{ and } k_1  \geq  \epsilon k_2 \quad \mbox{(since $k, k_2 \geq 0$)}\enspace . \label{eq:1}
\end{eqnarray}
Let us define $\lambda_{x,y} \defeq \frac{\mu_{x,y}}{2k_1}$ iff $f(x,y) = z$ and $\lambda_{x,y} \defeq \frac{\mu_{x,y}}{2k_2}$, otherwise. It is easily seen that $\lambda$ is a distribution on $\mcX \times \mcY \cap f^{-1}$ and $\lambda(f^{-1}(z)) = 0.5$. For all $R \in \mcR$,
\begin{eqnarray*}
\sum_{(x,y)\in f^{-1}(z)\cap R} \mu_{x,y}  - \sum_{(x,y)\in (R\cap f^{-1}) - f^{-1}(z)} \mu_{x,y} & \leq & 1 \\
\Rightarrow \sum_{(x,y)\in f^{-1}(z)\cap R} 2k_1\lambda_{x,y}  - \sum_{(x,y)\in R - f^{-1}(z)} 2k_2\lambda_{x,y} & \leq & 1 \\
\Rightarrow \sum_{(x,y)\in f^{-1}(z)\cap R} 2k_1\lambda_{x,y}  - \sum_{(x,y)\in R - f^{-1}(z)}  \frac{2k_1}{\epsilon} \lambda_{x,y} & \leq & 1  \quad \mbox{(from (\ref{eq:1}))}\\
\Rightarrow {\epsilon}\left(\sum_{(x,y)\in f^{-1}(z)\cap R} \lambda_{x,y} - \frac{1}{2k_1}\right)& \leq & \sum_{(x,y)\in R - f^{-1}(z)} \lambda_{x,y} \\
\Rightarrow {\epsilon}\left(\sum_{(x,y)\in f^{-1}(z)\cap R} \lambda_{x,y} - \frac{1}{2k}\right)& \leq & \sum_{(x,y)\in R - f^{-1}(z)} \lambda_{x,y} \quad \mbox{(from (\ref{eq:1}))}
\end{eqnarray*}
Let $R\in\mcR$ be such that $\sum_{(x,y)\in f^{-1}(z)\cap R} \lambda_{x,y} \geq \frac{1}{k}$. Then we have from above
\begin{equation} \frac{\epsilon}{2}\left(\sum_{(x,y)\in f^{-1}(z)\cap R} \lambda_{x,y} \right)  \leq  \sum_{(x,y)\in R - f^{-1}(z)} \lambda_{x,y} \enspace .\label{eq:2} \end{equation}
Therefore by definition $\trec^{z,\lambda}_{\frac{\epsilon}{2}}(f) \geq k$ which implies $\trec^z_{\frac{\epsilon}{2}}(f) \geq k$.

\item Fix $z\in\mcZ$. Let $k = \trec^z_{2\epsilon}(f)$. Let $\lambda$ be a distribution on $\mcX\times\mcY \cap f^{-1}$ such that $\trec^z_{2\epsilon}(f) = \trec^{z,\lambda}_{2\epsilon}(f)$ and $\lambda(f^{-1}(z)) \geq 0.5$. Let us define $\mu_{x,y} \defeq k \cdot \lambda_{x,y}$ iff $f(x,y) =z$; $\mu_{x,y} \defeq k \cdot \frac{\lambda_{x,y}}{2\epsilon}$ iff $(x,y) \in f^{-1} - f^{-1}(z)$ and $\mu_{x,y} =0$ otherwise. Now let $R\in\mcR$ be such that $\lambda(f^{-1}(z)\cap R ) \leq \frac{1}{k}$, then
\begin{eqnarray*}
\sum_{(x,y) \in f^{-1}(z)\cap R}\lambda_{x,y}  \leq \frac{1}{k} \quad \Rightarrow \quad \sum_{(x,y) \in f^{-1}(z)\cap R}\mu_{x,y} \leq 1 \enspace .
\end{eqnarray*}
Let $\lambda(f^{-1}(z)\cap R ) > \frac{1}{k}$, then
\begin{eqnarray*}
2\epsilon\sum_{(x,y) \in f^{-1}(z)\cap R}\lambda_{x,y}  & \leq & \sum_{(x,y) \in R - f^{-1}(z)} \lambda_{x,y} \\
\Rightarrow  \sum_{(x,y) \in f^{-1}(z)\cap R}\mu_{x,y} & \leq & \sum_{(x,y) \in (R\cap f^{-1}) - f^{-1}(z)} \mu_{x,y} \enspace .
\end{eqnarray*}
Hence the constraints of the dual program for $\rec^z_\epsilon(f)$ are satisfied by $\{\mu_{x,y} : (x,y) \in \mcX\times\mcY\}$. Now,
\begin{eqnarray*}
\rec^z_\epsilon(f) & \geq & \sum_{(x,y) \in f^{-1}(z)}  (1-\epsilon) \cdot  \mu_{x,y}  - \sum_{(x,y) \in f^{-1} - f^{-1}(z)} \epsilon \cdot \mu_{x,y} \\
& = & k \cdot \left(\sum_{(x,y) \in f^{-1}(z)}  (1-\epsilon) \cdot  \lambda_{x,y}  - \sum_{ (x,y) \in f^{-1} - f^{-1}(z)} \frac{\lambda_{x,y}}{2} \right) \\
& \geq& \frac{k}{2} \cdot (\frac{1}{2} - \epsilon) \quad \mbox{(since $\lambda(f^{-1}(z)) \geq 0.5$)}\enspace .
\end{eqnarray*}
\end{enumerate}
\end{proof}

The following lemma shows the equivalence of the two definitions of the smooth-rectangle bound.
\begin{Lem} \label{lem:srecsim} Let $f : \mcX \times \mcY \rightarrow \mcZ$ be a function and let $\epsilon >0$.  Then for all $z\in \mcZ$,
\begin{enumerate}
\item $\srec^{z}_\epsilon(f) \leq \tsrec^z_{\frac{\epsilon}{2},\frac{1-\epsilon}{2}}(f)$.
\item $ \srec^z_\epsilon(f)  \geq \frac{1}{2} \cdot (\frac{1}{4} - \epsilon) \cdot \tsrec^z_{2\epsilon,\frac{\epsilon}{2}}(f)$.
\end{enumerate}
\end{Lem}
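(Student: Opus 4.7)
The plan is to adapt the argument of Lemma~\ref{lem:recsim} to the smooth setting. In part 1 I construct $g$ and $\lambda$ from an optimal $\srec$ dual solution; in part 2 I construct an $\srec$ dual solution from an optimal pair $g,\lambda$ for $\tsrec$. In both directions the auxiliary variable $\phi$ of the $\srec$ dual plays the role of ``fractional flipping'' that the natural definition handles via the choice of $g$.

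For part 1, let $(\mu,\phi)$ be an optimal dual solution of $\srec^z_\epsilon(f)$ with value $k$, and assume without loss of generality that $\mu$ and $\phi$ vanish outside $f^{-1}$. Partition $f^{-1}(z)$ into $A_1 = \{(x,y):\mu_{x,y}\geq\phi_{x,y}\}$ and $A_2 = f^{-1}(z)-A_1$, and let $B = f^{-1}-f^{-1}(z)$. Define $g$ to agree with $f$ on $A_1\cup B$ and to output some fixed $z'\neq z$ on $A_2$, so that $g^{-1}(z)=A_1$. Set $\nu_{x,y}=|\mu_{x,y}-\phi_{x,y}|$ on $f^{-1}(z)$ and $\nu_{x,y}=\mu_{x,y}$ on $B$; then a short algebraic computation shows that the $\srec$ dual constraint reduces to
\[
\sum_{(x,y)\in A_1\cap R}\nu_{x,y}-\sum_{(x,y)\in(A_2\cup B)\cap R}\nu_{x,y}\leq 1,
\]
which is precisely the dual-feasibility condition of $\rec^z(g)$ with measure $\nu$. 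Rescale $\nu$ to a probability distribution by $\lambda_{x,y}=\nu_{x,y}/(2T_1)$ on $A_1$ and $\lambda_{x,y}=\nu_{x,y}/(2T_2)$ on $A_2\cup B$, where $T_1=\nu(A_1)$ and $T_2=\nu(A_2\cup B)$; this yields $\lambda(g^{-1}(z))=1/2$. The estimate $T_1\geq k$, obtained by bounding the $\srec$ dual objective, together with the translated constraint gives $\trec^{z,\lambda}_{\epsilon/2}(g)\geq k$ through the usual case analysis on whether $\lambda(A_1\cap R)\leq 1/k$. The error bound $\Pr_\lambda[f\neq g]=\lambda(A_2)<(1-\epsilon)/2$ should follow from combining $k\geq 0$, which gives $(1-\epsilon)(T_1-\nu(A_2))\geq\epsilon(\sum\phi+\nu(B))$, with $\sum_{A_2}\phi_{x,y}\geq\nu(A_2)$.

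For part 2, start with $g,\lambda$ witnessing $\tsrec^z_{2\epsilon,\epsilon/2}(f)=k$ and mimic Lemma~\ref{lem:recsim}(2): on $f^{-1}$, set $\mu_{x,y}=k\lambda_{x,y}$ on $g^{-1}(z)$ and $\mu_{x,y}=k\lambda_{x,y}/(2\epsilon)$ on the complement. Use $\phi$ supported on the disagreement set $f^{-1}(z)-g^{-1}(z)$, whose $\lambda$-mass is below $\epsilon/2$ by hypothesis, to cancel the corresponding terms on the $f^{-1}(z)$ side of the $\srec$ dual constraint. Verifying dual feasibility then splits into the two usual cases depending on whether $\lambda(g^{-1}(z)\cap R)$ exceeds $1/k$, both handled by combining the $\trec^{z,\lambda}_{2\epsilon}(g)\geq k$ hypothesis with the $\phi$-cancellation. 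The claimed multiplicative factor $\tfrac{1}{2}(\tfrac{1}{4}-\epsilon)$ falls out of comparing the resulting dual objective to $k$: the $\tfrac{1}{2}$ comes from $\lambda(g^{-1}(z))\geq\tfrac{1}{2}$ as in Lemma~\ref{lem:recsim}(2), while $\tfrac{1}{4}-\epsilon$ tracks the worst-case contribution of the disagreement set both in the dual constraint and in the objective.

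The hardest step will be the fine algebraic bookkeeping in both directions. In part 1, the smoothness bound $(1-\epsilon)/2$ is delicate: it requires extracting $\epsilon\,\nu(A_2)<(1-\epsilon)\nu(B)$ from $k\geq 0$ via the lower bound $\sum\phi\geq\nu(A_2)$, and making sure the estimate $T_1\geq k$ used for the rectangle-bound translation is compatible with this. In part 2, the risk is that the disagreement-mass contribution forces a rescaling of $\mu$ by $\Theta(1/k)$ that would collapse the dual objective to $O(1)$; avoiding this requires using the $\phi$-cancellation tightly and exploiting that $2\epsilon\,\lambda(g^{-1}(z)\cap R)\leq\lambda(R-g^{-1}(z))$ in case~B already absorbs part of the disagreement. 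Both points ultimately come down to interpreting the $\phi$-variable as the price paid inside the $\srec$ LP to simulate modifications of $f$, and keeping that price bounded by a constant fraction of the $\tsrec$ value.
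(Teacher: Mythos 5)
Your plan for part 1 is essentially the paper's argument: the paper first normalizes the dual solution so that on each $(x,y)\in f^{-1}(z)$ either $\mu_{x,y}=0$ or $\phi_{x,y}=0$ (showing the alternative is suboptimal), and then takes $\mu'_{x,y}=\max(\mu_{x,y},\phi_{x,y})$, which after normalization equals $|\mu_{x,y}-\phi_{x,y}|$. Your direct use of $\nu_{x,y}=|\mu_{x,y}-\phi_{x,y}|$ is the same thing without the explicit normalization step, and the translation of the $\srec$ dual constraint into a $\rec$ dual constraint for the flipped $g$ is correct. The one caveat is that your stated derivation of the error bound $\Pr_\lambda[f\neq g]<(1-\epsilon)/2$ --- ``$k\geq 0$ gives $(1-\epsilon)(T_1-\nu(A_2))\geq\epsilon(\sum\phi+\nu(B))$'' --- does not match what $k\geq 0$ actually yields (namely $(1-\epsilon)T_1\geq\epsilon T_2$), and this step needs more care; the paper itself combines the nonnegativity of the objective with the constraint applied to the full rectangle $\mcX\times\mcY$ and still has to hand-wave (``assume $\srec^z_\epsilon(f)$ is at least a large constant''), so you should be aware this is the delicate spot.

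Part 2, however, has a genuine gap in the construction. You set $\mu_{x,y}=k\lambda_{x,y}/(2\epsilon)$ on all of $f^{-1}-g^{-1}(z)$, which includes the disagreement set $D=f^{-1}(z)-g^{-1}(z)$, and you then put $\phi$ on $D$ ``to cancel'' those terms. If ``cancel'' means $\phi_{x,y}=\mu_{x,y}$ on $D$ so that $\mu_{x,y}-\phi_{x,y}=0$ there, then the $\srec$ dual constraint for a rectangle $R$ becomes
\[
k\lambda(g^{-1}(z)\cap R)-\tfrac{k}{2\epsilon}\lambda\bigl((R\cap f^{-1})-f^{-1}(z)\bigr)\leq 1,
\]
but the hypothesis $\trec^{z,\lambda}_{2\epsilon}(g)\geq k$ in the ``large'' case only gives $k\lambda(g^{-1}(z)\cap R)\leq\tfrac{k}{2\epsilon}\lambda(R-g^{-1}(z))$, and $R-g^{-1}(z)$ is strictly larger than $(R\cap f^{-1})-f^{-1}(z)$ by the set $D\cap R$. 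So the best you get is that the constraint is bounded by $\tfrac{k}{2\epsilon}\lambda(D\cap R)$, which is not controlled by $1$ (the total-mass bound $\lambda(D)<\epsilon/2$ gives only $\approx k/4$). Conversely, if you set $\phi$ larger so that $\mu-\phi$ becomes negative on $D$, you need $\phi=k\lambda/\epsilon$ there and the dual objective collapses. The paper's fix is structurally different: it sets $\mu_{x,y}=0$ on $D$ and $\phi_{x,y}=k\lambda_{x,y}/(2\epsilon)$ on $D$, so that $\mu_{x,y}-\phi_{x,y}=-k\lambda_{x,y}/(2\epsilon)$ on $D$. This moves the entire mass of $D\cap R$ to the negative side of the constraint, making it exactly $k\lambda(g^{-1}(z)\cap R)-\tfrac{k}{2\epsilon}\lambda(R-g^{-1}(z))\leq 0$, while the objective only loses $\tfrac{k}{2\epsilon}\lambda(D)$, which is controlled by $\lambda(D)<\epsilon/2$. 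The key conceptual point your sketch misses is that $\phi$ must be used to make the disagreement set contribute a large \emph{negative} term to the $f^{-1}(z)$ side of the constraint (mirroring its role in $R-g^{-1}(z)$ for the $\trec$ hypothesis), not merely to zero out an erroneously-placed $\mu$.
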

\begin{proof}
\begin{enumerate}
\item  Fix $z\in\mcZ$.   Let $\{\mu_{x,y}, \phi_{x,y} : (x,y) \in \mcX\times\mcY\}$ be an optimal solution to the dual for $\srec^z_\epsilon(f)$. We can assume w.l.o.g. that $(x,y)\notin f^{-1} \Rightarrow \mu_{x,y} = \phi_{x,y}=0$; also  that $(x,y) \notin f^{-1}(z) \Rightarrow \phi_{x,y}=0$. Let us observe that we can assume w.l.o.g.~that $\forall (x,y) \in f^{-1}(z)$, either $\mu_{x,y} = 0$ or $\phi_{x,y} = 0$.  Otherwise let us say that for some $(x,y) \in f^{-1}(z): \mu_{x,y} \geq \phi_{x,y} > 0$. Then using $\mu_{x,y}' \defeq \mu_{x,y} - \phi_{x,y}$ and $\phi_{x,y}' \defeq 0$ instead of $(\mu_{x,y}, \phi_{x,y})$, and the rest the same, is a strictly better solution; that is the objective function is strictly larger in the new case. A similar argument can be made if for some $(x,y) \in f^{-1}(z): \phi_{x,y} \geq \mu_{x,y} > 0$.

Let $g: \mcX \times \mcY \rightarrow \mcZ$ be such that $g(x,y) = f(x,y)$ iff $\phi_{x,y}=0$ and $g(x,y) \neq f(x,y)$ otherwise ($g$ remains undefined wherever $f$ is undefined). For all $(x,y)$ let $\mu'_{x,y} \defeq \mu_{x,y}$ iff $\phi_{x,y}=0$ and $\mu'_{x,y} = \phi_{x,y}$ otherwise. Then $\forall (x,y), \mu'_{x,y} \geq 0$ and
\begin{align}
& \forall R\in\mcR : \sum_{(x,y)\in f^{-1}(z)\cap R} (\mu_{x,y} - \phi_{x,y}) - \sum_{(x,y)\in (R\cap f^{-1}) - f^{-1}(z)} \mu_{x,y} \leq 1 \nonumber \\
& \Rightarrow \forall R\in\mcR : \sum_{(x,y)\in g^{-1}(z)\cap R} \mu'_{x,y}  - \sum_{(x,y)\in (R\cap g^{-1}) - g^{-1}(z)} \mu'_{x,y} \leq 1 \enspace . \label{eq:srec1}
\end{align}
Hence $\{\mu'_{x,y} : (x,y) \in \mcX\times\mcY\}$ is a feasible solution to the dual of $\rec_\epsilon^z(g)$. Now,
\begin{eqnarray}
k & \defeq & \sum_{(x,y) \in g^{-1}(z)}  (1-\epsilon) \cdot  \mu'_{x,y}  - \sum_{(x,y) \in g^{-1} - g^{-1}(z)} \epsilon \cdot \mu'_{x,y} \label{eq:srec2} \\
& = & \sum_{(x,y) \in f^{-1}(z)}  (1-\epsilon) \cdot  \mu_{x,y} - \sum_{(x,y) \in f^{-1}(z)}  \epsilon \cdot  \phi_{x,y}  - \sum_{(x,y) \in f^{-1} - f^{-1}(z)} \epsilon \cdot \mu_{x,y} \nonumber \\
& \geq &  \sum_{(x,y) \in f^{-1}(z)}  (1-\epsilon) \cdot  \mu_{x,y} - \sum_{(x,y) \in f^{-1}(z)}  \phi_{x,y}  - \sum_{(x,y) \in f^{-1} - f^{-1}(z)} \epsilon \cdot \mu_{x,y} \nonumber \\
& = & \srec_\epsilon^z(f) \enspace . \label{eq:srec3}
\end{eqnarray}
Let $k_1 \defeq \sum_{(x,y) \in g^{-1}(z)} \mu'_{x,y}$ and $k_2 \defeq \sum_{(x,y) \in g^{-1} - g^{-1}(z)} \mu'_{x,y}$. Let $\lambda_{x,y} \defeq \frac{\mu'_{x,y}}{2k_1}$ iff $g(x,y) = z$ and $\lambda_{x,y} \defeq \frac{\mu'_{x,y}}{2k_2}$, otherwise. It is clear that $\lambda$ is a distribution on $\mcX\times\mcY \cap g^{-1}$ and $\lambda(g^{-1}(z))=0.5$. As in the proof of Part 1. of Lemma~\ref{lem:recsim}, using (\ref{eq:srec1}) and (\ref{eq:srec2}), we can argue that $\trec^{z,\lambda}_{\frac{\epsilon}{2}}(g) \geq \rec^{z,\lambda}_{\epsilon}(g) \geq k$. Also since $\sum_{(x,y)\in f^{-1}} ((1-\epsilon)\mu_{x,y} - \phi_{x,y}) \geq 0$ and $\sum_{(x,y)\in f^{-1}(z)} (\mu_{x,y} - \phi_{x,y}) - \sum_{(x,y)\in (f^{-1}) - f^{-1}(z)} \mu_{x,y} \leq 1$ we can argue that $\sum_{(x,y)\in f^{-1}(z)} \phi_{x,y}  \leq (1-\epsilon)k_2$ (we assume $\srec^z_\epsilon(f)$ is at least a large constant) . Therefore,
\begin{align*}
\Pr_{(x,y) \leftarrow \lambda}[g(x,y)\neq f(x,y)] = \sum_{(x,y)\in f^{-1}(z)} \frac{\phi_{x,y}}{2k_2} \leq \frac{1-\epsilon}{2} \enspace .
\end{align*}
Hence by definition, $\tsrec^z_{\frac{\epsilon}{2},\frac{1-\epsilon}{2}}(f) \geq \tsrec^{z,\lambda}_{\frac{\epsilon}{2},\frac{1-\epsilon}{2}}(f) \geq \trec^{z,\lambda}_{\frac{\epsilon}{2}}(g) \geq k \geq \srec_\epsilon^z(f)$. The last inequality follows from (\ref{eq:srec3}).

\item  Fix $z\in\mcZ$. Let $k \defeq \tsrec^z_{2\epsilon,\frac{\epsilon}{2}}(f)$. Let $\lambda$ be distribution on $\mcX\times\mcY \cap f^{-1}$ such that $\tsrec^z_{2\epsilon,\frac{\epsilon}{2}}(f) = \tsrec^{z,\lambda}_{2\epsilon,\frac{\epsilon}{2}}(f)$. Let $g : \mcX \times \mcY \rightarrow \mcZ$ be a function such that $\tsrec^{z,\lambda}_{2\epsilon,\frac{\epsilon}{2}}(f) = \rec^{z,\lambda}_{2\epsilon}(g)$ and $\lambda(g^{-1}(z)) \geq 0.5 $ and $\lambda(f\neq g) \leq \epsilon/2$. Note that we can assume w.l.o.g.~that $g(x,y) \neq f(x,y) \Rightarrow f(x,y) = z$.

For $(x,y)\in f^{-1}$, let us define $\mu_{x,y} \defeq k \cdot \lambda_{x,y}$ iff $g(x,y) = f(x,y) =z$ and $\mu_{x,y} \defeq k \cdot \frac{\lambda_{x,y}}{2\epsilon}$ iff $f(x,y) \neq z$. Let $\phi_{x,y} \defeq k \cdot \frac{\lambda_{x,y}}{2\epsilon}$ iff $z = f(x,y) \neq g(x,y)$. For $(x,y)\notin f^{-1}$, let $\mu_{x,y}=\phi_{x,y}=0$. Now let $R\in\mcR$ be such that $\lambda(g^{-1}(z)\cap R ) \leq \frac{1}{k}$, then
\begin{eqnarray*}
\sum_{(x,y) \in g^{-1}(z)\cap R}\lambda_{x,y}  \leq \frac{1}{k} \quad \Rightarrow \quad \sum_{(x,y) \in g^{-1}(z)\cap R}\mu_{x,y} & \leq & 1 \\
\Rightarrow  \sum_{(x,y) \in f^{-1}(z)\cap R}\mu_{x,y} - \phi_{x,y} & \leq & 1 \enspace .
\end{eqnarray*}
Let $\lambda(g^{-1}(z)\cap R ) > \frac{1}{k}$, then
\begin{eqnarray*}
2\epsilon\sum_{(x,y) \in g^{-1}(z)\cap R}\lambda_{x,y}  & \leq & \sum_{(x,y) \in R - g^{-1}(z)} \lambda_{x,y} \\
\Rightarrow  \sum_{(x,y) \in g^{-1}(z)\cap R}\mu_{x,y} & \leq & \sum_{(x,y) \in R - g^{-1}(z)} \mu_{x,y} + \phi_{x,y} \\
\Rightarrow  \sum_{(x,y) \in f^{-1}(z)\cap R}\mu_{x,y}  - \phi_{x,y}& \leq & \sum_{(x,y) \in (R\cap f^{-1}) - f^{-1}(z)} \mu_{x,y} \enspace .
\end{eqnarray*}
Hence the constraints of the dual program for $\srec^z_\epsilon(f)$ are satisfied by $\{\mu_{x,y}, \phi_{x,y} : (x,y)\in\mcX\times\mcY\}$. Now,
\begin{eqnarray*}
\srec^z_\epsilon(f) & \geq & \sum_{(x,y) \in f^{-1}(z)}  \left((1-\epsilon) \cdot  \mu_{x,y}  -\phi_{x,y} \right) - \sum_{(x,y) \in f^{-1} - f^{-1}(z)} \epsilon \cdot \mu_{x,y} \\
& \geq & \sum_{(x,y) \in g^{-1}(z)}  (1-\epsilon) \cdot  \mu_{x,y}  - \sum_{(x,y) \in f^{-1}(z)} \phi_{x,y} - \sum_{(x,y) \notin g^{-1}(z)} \epsilon \cdot \mu_{x,y} \\
& = & k \cdot \left(\sum_{(x,y) \in g^{-1}(z)}  (1-\epsilon) \cdot  \lambda_{x,y}  -  \frac{1}{2\epsilon}\sum_{(x,y) : f(x,y) \neq g(x,y)} \lambda_{x,y}  - \sum_{(x,y) \notin g^{-1}(z)} \frac{\lambda_{x,y}}{2} \right) \\
& \geq& \frac{k}{2} \cdot (\frac{1}{4} - \epsilon) \enspace .
\end{eqnarray*}
The last inequality follows since $\lambda(g^{-1}(z)) \geq 0.5$  and $\lambda(f\neq g) \leq \epsilon/2$.
\end{enumerate}
\end{proof}


The following lemma shows the equivalence of the two definitions of smooth-discrepancy.
\begin{Lem}\label{lem:sdisceq}
Let $f : \mcX\times\mcY\rightarrow \{0,1\}$ be a function and let $\epsilon > 0$. Then
\begin{enumerate}
\item $\tsdisc_{\frac{1}{2} - \frac{\epsilon}{8} }(f) \geq \sdisc_\epsilon(f) $.
\item $\frac{1}{2} \cdot\tsdisc_{\frac{1}{4+2\epsilon} }(f) \leq \sdisc_\epsilon(f) $.
\end{enumerate}
\end{Lem}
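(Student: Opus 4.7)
The plan is to identify the dual variables of the $\sdisc_\epsilon(f)$ linear program with a signed measure $\sigma_{x,y}=\mu_{x,y}-\phi_{x,y}$, to read off from it a distribution $\lambda$ and a Boolean function $g$ close to $f$ under $\lambda$, and to check that the two dual inequalities of $\sdisc_\epsilon$ collapse to exactly the sign-based discrepancy condition $|\sum_{(x,y)\in R}(-1)^{g(x,y)}\lambda_{x,y}|\leq 1/k$ witnessing $\disc^\lambda(g)\geq k$. This mirrors the approach used in Lemma~\ref{lem:srecsim}, except that the one-sided rectangle-bound LP is replaced by the $\pm$-signed formulation that governs discrepancy.

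For Part 1 I would take an optimal dual solution $\{\mu_{x,y},\phi_{x,y}\}$ of $\sdisc_\epsilon(f)$ and first argue WLOG that $\mu_{x,y}\cdot\phi_{x,y}=0$ at every $(x,y)$: shifting both variables down by $\min(\mu_{x,y},\phi_{x,y})$ preserves the constraints (which depend only on $\mu-\phi$) and, as long as $\epsilon>0$, strictly increases the objective $\sum\mu-(1+\epsilon)\phi$. Setting $\sigma_{x,y}=\mu_{x,y}-\phi_{x,y}$, the two dual inequalities combine to $|\sum_{(x,y)\in R}(-1)^{f(x,y)}\sigma_{x,y}|\leq 1$ for every rectangle $R\in\mcR$. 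I would then define $g$ to agree with $f$ wherever $\sigma_{x,y}\geq 0$ and to flip $f$ wherever $\sigma_{x,y}<0$, and set $\lambda_{x,y}=|\sigma_{x,y}|/k$ with $k=\sum_{x,y}|\sigma_{x,y}|$. A short check shows $(-1)^{g(x,y)}|\sigma_{x,y}|=(-1)^{f(x,y)}\sigma_{x,y}$, so $\disc^\lambda(g)\geq k$. Writing $a=\sum_{\sigma_{x,y}\geq 0}\sigma_{x,y}$ and $b=\sum_{\sigma_{x,y}<0}|\sigma_{x,y}|$, the objective equals $a-(1+\epsilon)b=\sdisc_\epsilon(f)$, and its nonnegativity forces $b/k\leq 1/(2+\epsilon)\leq 1/2-\epsilon/8$; thus $g$ is admissible in the natural definition and $\tsdisc_{1/2-\epsilon/8}(f)\geq k\geq a-(1+\epsilon)b=\sdisc_\epsilon(f)$.

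For Part 2 I would run the correspondence in reverse. Let $\lambda$ and $g$ achieve $k\defeq\tsdisc_{1/(4+2\epsilon)}(f)=\disc^\lambda(g)$ with $\Pr_\lambda[f\neq g]<1/(4+2\epsilon)$, and set $\sigma_{x,y}=+k\lambda_{x,y}$ where $g(x,y)=f(x,y)$ and $\sigma_{x,y}=-k\lambda_{x,y}$ otherwise; let $\mu,\phi$ be the positive and negative parts of $\sigma$. The identity $(-1)^{f(x,y)}\sigma_{x,y}=(-1)^{g(x,y)}k\lambda_{x,y}$ turns the discrepancy guarantee $|\sum_{(x,y)\in R}(-1)^{g(x,y)}\lambda_{x,y}|\leq 1/k$ directly into the two dual constraints of $\sdisc_\epsilon(f)$, and the objective evaluates to $k(1-(2+\epsilon)\Pr_\lambda[g\neq f])\geq k/2$ by the choice of $\delta=1/(4+2\epsilon)$, yielding $\sdisc_\epsilon(f)\geq k/2$.

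The main delicacy will be the numerical balancing in Part 1: one has to use the nonnegativity of the optimal objective (together with $\epsilon>0$) to pin down the fraction of $\lambda$-mass on which $g$ disagrees with $f$, and then check $1/(2+\epsilon)\leq 1/2-\epsilon/8$ in the relevant range of $\epsilon$. Once the signed-measure bijection between dual feasible solutions of $\sdisc_\epsilon$ and $(g,\lambda)$-pairs of the natural definition is in place, everything else is routine bookkeeping analogous to Lemma~\ref{lem:srecsim}.
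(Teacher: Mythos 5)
Your proposal is correct and follows essentially the same route as the paper's proof: both arguments normalize the optimal dual solution so that $\mu_{x,y}\phi_{x,y}=0$ pointwise, read off $\lambda$ from the magnitudes and $g$ from the sign pattern of $\mu-\phi$ (Part 1), and in Part 2 reverse this by splitting $k\lambda$ into positive and negative parts according to where $g$ agrees or disagrees with $f$, with identical numerical bookkeeping in both directions. The only cosmetic difference is your explicit "signed measure" notation $\sigma=\mu-\phi$ and the use of $(-1)^{f}$ factors, where the paper writes $\lambda'_{x,y}=\max\{\mu_{x,y},\phi_{x,y}\}$ and keeps the two signed sums over $f^{-1}(1)$ and $f^{-1}(0)$ separate.
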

\begin{proof}
\begin{enumerate}
\item Let $k \defeq \sdisc_\epsilon(f)$. Let $\{\mu_{x,y}, \phi_{x,y}\}$ be an optimal solution to the dual for $\sdisc_\epsilon(f)$.
As in the proof of Lemma~\ref{lem:srecsim}, we can argue that for all $(x,y) \in f^{-1}$, either $\mu_{x,y} = 0$ or $\phi_{x,y} = 0$.  For $(x,y) \in f^{-1}$, let us define $\lambda'_{x,y} \defeq \max\{\mu_{x,y},\phi_{x,y}\}$ and let $\lambda_{x,y} \defeq \frac{\lambda'_{x,y}}{\sum_{(x,y)\in f^{-1}} \lambda'_{x,y}}$. It is clear that $\lambda$ is a distribution on $f^{-1}$. Let us define $g : \mcX\times\mcY\rightarrow \{0,1\}$ such that $g^{-1} = f^{-1}$. For  $(x,y) \in f^{-1}$, let $g(x,y) = f(x,y) $ iff $\phi_{x,y}=0$ and let $g(x,y) \neq f(x,y) $ iff $\phi_{x,y} \neq 0$. Now
\begin{align*}
& \forall R : |\sum_{(x,y)\in f^{-1}(1)\cap R} (\mu_{x,y} - \phi_{x,y}) - \sum_{(x,y)\in R \cap f^{-1}(0) } (\mu_{x,y} - \phi_{x,y}) | \leq 1 \\
 \Rightarrow \quad & \forall R : |\sum_{(x,y)\in g^{-1}(1)\cap R} \lambda'_{x,y} - \sum_{(x,y)\in R \cap g^{-1}(0) } \lambda'_{x,y} | \leq 1 \\
 \Rightarrow \quad & \forall R : |\sum_{(x,y)\in g^{-1}(1)\cap R} \lambda_{x,y} - \sum_{(x,y)\in R \cap g^{-1}(0) } \lambda_{x,y} | \leq \frac{1}{\sum_{x,y} \mu_{x,y} + \phi_{x,y}} \leq \frac{1}{k} \enspace .
\end{align*}
Hence $\disc^\lambda(g) \geq k$. Also since $\sum_{(x,y)} \mu_{x,y} - (1+\epsilon) \phi_{x,y} \geq 0$,
\begin{align*}
\Pr_{(x,y) \leftarrow \lambda} [g(x,y) \neq f(x,y)] = \frac{1}{\sum_{x,y} \mu_{x,y} + \phi_{x,y}} \sum_{(x,y)} \phi_{x,y} < \frac{1}{2+\epsilon} \leq \frac{1}{2} - \frac{\epsilon}{8} \enspace .
\end{align*}
Hence our result.

\item Let $\delta \defeq \frac{1}{4+2\epsilon}$. Let $\lambda$ be a distribution on $f^{-1}$ such that  $k \defeq \tsdisc_{\delta}(f) = \tsdisc^\lambda_{\delta}(f) $ and $\Pr_{(x,y) \leftarrow \lambda} [g(x,y) \neq f(x,y)] < \delta$. For $(x,y) \in f^{-1}$, let $\mu_{x,y} \defeq k \cdot \lambda_{x,y}; \phi_{x,y} =0$ iff $f(x,y) = g(x,y)$ and $\phi_{x,y} \defeq k \cdot \lambda_{x,y}; \mu_{x,y} =0$ iff $f(x,y) \neq g(x,y)$. Then,
\begin{align*}
& \forall R : |\sum_{(x,y)\in g^{-1}(1)\cap R} \lambda_{x,y} - \sum_{(x,y)\in R \cap g^{-1}(0) } \lambda_{x,y} | \leq \frac{1}{k} \\
 \Rightarrow \quad & \forall R : |\sum_{(x,y)\in f^{-1}(1)\cap R} (\mu_{x,y} - \phi_{x,y}) - \sum_{(x,y)\in R \cap f^{-1}(0) } (\mu_{x,y} - \phi_{x,y}) | \leq 1 \enspace .
\end{align*}
Hence $\{\mu_{x,y}, \phi_{x,y}\}$ form a feasible solution to the dual for $\sdisc_\epsilon(f)$. Now,
\begin{align*}
\sdisc_\epsilon(f) \geq \sum_{(x,y)} \mu_{x,y} - (1+\epsilon) \phi_{x,y} > k((1-\delta) - (1+\epsilon) \delta ) = k(1 - (2+\epsilon) \delta ) = \frac{k}{2} \enspace .
\end{align*}
\end{enumerate}
\end{proof}

The following lemma states the rectangle bound dominates the discrepancy bound for Boolean functions and hence the smooth-rectangle bound dominates the smooth-discrepancy bound.
\begin{Lem}\label{lem:recgeqdisc}
Let $f : \mcX\times\mcY\rightarrow \{0,1\}$ be a function; let $z\in \{0,1\}$ and let $\lambda$ be a distribution on $\mcX\times\mcY \cap f^{-1}$. Let $\epsilon, \delta > 0$, then
$$\rec^{z}_\epsilon(f) \geq (\frac{1}{2} - \epsilon)\disc^\lambda(f)  - \frac{1}{2} \enspace .$$
This implies by definition and Lemma~\ref{lem:recsim},
$$\trec^{z}_{\frac{\epsilon}{2}}(f) \geq \rec^{z}_\epsilon(f) \geq (\frac{1}{2} - \epsilon)\disc(f)  - \frac{1}{2} \enspace ,$$
$$\Rightarrow \tsrec^{z}_{\frac{\epsilon}{2}, \delta}(f) \geq (\frac{1}{2} - \epsilon)\tsdisc_\delta(f)  - \frac{1}{2} \enspace .$$
\end{Lem}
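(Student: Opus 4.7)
The plan is to prove the core inequality $\rec^z_\epsilon(f) \geq (\tfrac{1}{2} - \epsilon)\disc^\lambda(f) - \tfrac{1}{2}$ by exhibiting a feasible solution to the dual LP for $\rec^z_\epsilon(f)$. The two displayed implications then drop out: maximizing over $\lambda$ turns $\disc^\lambda$ into $\disc$, Part 1 of Lemma~\ref{lem:recsim} converts the $\rec^z_\epsilon$ bound into a $\trec^z_{\epsilon/2}$ bound, and for the smooth version the same argument is applied to a function $g$ that is $\delta$-close to $f$ under $\lambda$ (modulo a negligible modification to arrange $\lambda(g^{-1}(z)) \geq \tfrac{1}{2}$) and lifted to $f$ via the natural definition of $\tsrec^z_{\epsilon/2,\delta}(f)$.

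Fix $z \in \{0,1\}$ and set $k \defeq \disc^\lambda(f)$. The definition of $\disc^\lambda$ gives the per-rectangle bound
\[
\forall R \in \mcR:\quad \bigl|\lambda(R \cap f^{-1}(0)) - \lambda(R \cap f^{-1}(1))\bigr| \leq \tfrac{1}{k}.
\]
I would take as candidate dual vector $\mu_{x,y} \defeq k\lambda_{x,y}$ for $(x,y) \in f^{-1}$, and $\mu_{x,y} \defeq 0$ otherwise. Nonnegativity is immediate. For each rectangle $R$,
\[
\sum_{(x,y)\in f^{-1}(z)\cap R} \mu_{x,y} \;-\! \sum_{(x,y)\in (R \cap f^{-1})-f^{-1}(z)} \mu_{x,y} \;=\; k\bigl(\lambda(R\cap f^{-1}(z)) - \lambda(R\cap f^{-1}(1-z))\bigr) \;\leq\; 1,
\]
using the discrepancy bound on $R$. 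So $\{\mu_{x,y}\}$ is dual feasible for $\rec^z_\epsilon(f)$.

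Next, I would apply the very same discrepancy bound to the full rectangle $R = \mcX\times\mcY$: since $\lambda$ is a probability distribution on $f^{-1}$, this gives $|\lambda(f^{-1}(0)) - \lambda(f^{-1}(1))| \leq \tfrac{1}{k}$, and together with $\lambda(f^{-1}(0)) + \lambda(f^{-1}(1)) = 1$ it yields $\lambda(f^{-1}(z)) \geq \tfrac{1}{2} - \tfrac{1}{2k}$. The dual objective on $\{\mu_{x,y}\}$ is therefore
\[
k\cdot\bigl((1-\epsilon)\lambda(f^{-1}(z)) - \epsilon(1-\lambda(f^{-1}(z)))\bigr) \;=\; k\bigl(\lambda(f^{-1}(z)) - \epsilon\bigr) \;\geq\; \bigl(\tfrac{1}{2} - \epsilon\bigr)k - \tfrac{1}{2},
\]
which gives the desired $\rec^z_\epsilon(f) \geq (\tfrac{1}{2}-\epsilon)\disc^\lambda(f) - \tfrac{1}{2}$.

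Nothing in this argument is really an obstacle — it is a two-line LP duality check. The one observation worth flagging is the origin of the additive $-\tfrac{1}{2}$ in the bound: it comes precisely from using the discrepancy bound twice, once per rectangle for feasibility and once on the full rectangle to lower-bound $\lambda(f^{-1}(z))$, and is exactly the slack we pay for not assuming $\lambda$ to be perfectly balanced between $f^{-1}(0)$ and $f^{-1}(1)$. Without this second application, the argument would require $\lambda(f^{-1}(z)) \geq \tfrac{1}{2}$ as a hypothesis, which would complicate the passage to the smooth version.
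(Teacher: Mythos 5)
Your proof is correct and matches the paper's argument line for line: same dual vector $\mu_{x,y}=k\lambda_{x,y}$, same feasibility check via the per-rectangle discrepancy bound, and the same observation that applying the discrepancy bound to the full rectangle $\mcX\times\mcY$ gives $\lambda(f^{-1}(z))\geq\frac{1}{2}-\frac{1}{2k}$, which is exactly where the additive $-\frac{1}{2}$ comes from. Your closing remark on the origin of the slack is a nice gloss but not an additional idea; the approach is the same.
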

\begin{proof}
Let $k \defeq \disc^\lambda(f)$. Let $\forall (x,y)\in f^{-1}: \mu_{x,y} \defeq k \cdot \lambda_{x,y}$ and $\mu_{x,y}=0$ otherwise. Then we have:
\begin{align*}
 \forall R : \sum_{(x,y) \in R \cap f^{-1}(z)} \lambda_{x,y} - \sum_{(x,y) \in R - f^{-1}(z)} \lambda_{x,y} & \leq \frac{1}{k} \\
 \Rightarrow \forall R : \sum_{(x,y) \in R \cap f^{-1}(z)} \mu_{x,y} - \sum_{(x,y) \in (R\cap f^{-1}) - f^{-1}(z)} \mu_{x,y} & \leq 1 \enspace .
\end{align*}
Hence the constraints for the dual of the linear program for $\rec^{z}_\epsilon(f)$ are satisfied by $\{ \mu_{x,y} : (x,y) \in \mcX\times\mcY\}$. Now,
\begin{align*}
\rec^{z}_\epsilon(f) & \geq \sum_{(x,y) \in f^{-1}(z)}  (1-\epsilon) \cdot  \mu_{x,y}  - \sum_{(x,y) \in f^{-1} - f^{-1}(z)} \epsilon \cdot \mu_{x,y} \\
& = k \cdot \left(\sum_{(x,y) \in f^{-1}(z)}  (1-\epsilon) \cdot  \lambda_{x,y}  - \sum_{(x,y) \notin f^{-1}(z)} \epsilon \cdot \lambda_{x,y} \right)\\
& = k \cdot \left(\sum_{(x,y) \in f^{-1}(z)}    \lambda_{x,y}  -  \epsilon \right) \\
& \geq  k \cdot \left(\frac{1}{2} - \frac{1}{2k}  -  \epsilon \right) = (\frac{1}{2} - \epsilon) k - \frac{1}{2} \enspace .
\end{align*}
The last inequality follows since $\disc^\lambda(f) = k$.
\end{proof}

For a function $g: \mcX\times\mcY \rightarrow \{0,1\}$, let $A_g$ be the sign matrix corresponding to $g$, that is $A_g(x,y) \defeq (-1)^{g(x,y)}$. Similarly for a sign matrix $A$, let $g_A$ be the corresponding function given by $g_A(x,y) \defeq (1 - A(x,y))/2$. For distribution $\lambda$ on $\mcX\times\mcY$, let $P_\lambda$ be the matrix defined by $P_\lambda(x,y) \defeq \lambda(x,y)$. For matrix $B$, define $\signorm{B} \defeq \sum_{i,j}|B(i,j)|$. For matrices $C,D$, let $C \circ D$ denote the entry wise {\em Hadamard product} of $C,D$.
Following lemma states the equivalence between smooth-discrepancy and the $\gamma_2$ bound.
\begin{Lem}\label{lem:sdisceqgamma2}
Let $f:\mcX\times\mcY \rightarrow \{0,1\}$ be a Boolean function and let $\alpha >1$. Then
$$\frac{1}{2}\cdot \tsdisc_{\frac{1}{2(\alpha+1) }}(f) \leq \gamma_2^\alpha(A_f) \leq 8 \cdot \tsdisc_{ \frac{1}{\alpha+1}}(f) \enspace .$$
\end{Lem}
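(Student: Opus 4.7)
The plan is to reduce to the classical Linial--Shraibman / Lee--Shraibman equivalence between $\gamma_2$ and (standard) discrepancy for sign matrices --- in the notation of this paper, $\gamma_2(A_g)=\Theta(\disc(A_g))$ for a sign matrix $A_g$ --- and then to translate between the two different styles of smoothing: multiplicative approximation of $A_f$ on the $\gamma_2^\alpha$ side, versus replacement of $f$ by a $\lambda$-close function $g$ on the $\tsdisc$ side. Throughout, I would freely use the dual characterization of the $\gamma_2$ norm (as a max over rank-one sign matrices) and the already-established equivalence of the LP and natural formulations of $\tsdisc$ from Lemma~\ref{lem:sdisceq}.

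For the upper bound $\gamma_2^\alpha(A_f)\leq 8\,\tsdisc_{1/(\alpha+1)}(f)$, I would take optimal $\lambda$ and $g$ with $\Pr_\lambda[f\neq g]<1/(\alpha+1)$ and $k\defeq\disc^\lambda(g)=\tsdisc_{1/(\alpha+1)}(f)$. The matrix $A_g$ already has discrepancy $\geq k$ under $\lambda$, so by Linial--Shraibman we can produce a matrix $B'$ with $\gamma_2(B')=O(k)$ serving as a $\gamma_2^1$-witness for $A_g$ (essentially $B'=A_g$). I would then ``repair'' $B'$ on the small ($\lambda$-measure at most $1/(\alpha+1)$) disagreement set by rescaling those entries by a factor at most $\alpha$ with the flipped sign, producing $B$ with $1\le A_f(i,j)B(i,j)\le\alpha$ everywhere. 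A careful choice of the rescaling (and using that the modification is concentrated on a small set, so in the dual characterization of $\gamma_2$ each rank-one test has only a bounded extra contribution on that set) shows that $\gamma_2(B)\le 8k$.

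For the lower bound $\tfrac{1}{2}\,\tsdisc_{1/(2(\alpha+1))}(f)\leq\gamma_2^\alpha(A_f)$, I would go in the reverse direction. Start from an optimal dual certificate $Q$ for $\gamma_2^\alpha(A_f)$; normalize the positive part and negative part of $Q$ to produce a probability distribution $\lambda$ on $f^{-1}$ together with a sign pattern $M$. Define $g$ by $A_g=M$. The multiplicative constraint $1\le A_f\cdot B\le\alpha$ (together with the dual pairing that defines $\gamma_2^\alpha$) forces the $\lambda$-mass of $\{(x,y):f(x,y)\neq g(x,y)\}$ to be at most $1/(2(\alpha+1))$, essentially by an averaging argument on the $\alpha$-vs.-$1$ gap in the constraint. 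The Linial--Shraibman equivalence in the reverse direction then gives $\disc^\lambda(g)\ge 2\gamma_2^\alpha(A_f)$, and taking the max over $\lambda$ and $g$ yields the claimed bound.

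The main obstacle is reconciling the two notions of ``closeness'': $\gamma_2$ is a global operator-type norm, insensitive to the $\lambda$-measure, whereas the smoothing in $\tsdisc$ is controlled precisely by $\lambda$-measure. The bridge in both directions is the dual of $\gamma_2$ as a max over rank-one sign tests, which lets a single multiplicative rescaling of few entries absorb the approximation parameter $\alpha$ --- exactly the Linial--Shraibman mechanism. The remaining work is bookkeeping to land on the exact constants $1/2$ and $8$ and to match the parameter translations $\alpha\leftrightarrow 1/(\alpha+1)$ and $\alpha\leftrightarrow 1/(2(\alpha+1))$ on the two sides.
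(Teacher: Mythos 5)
There is a genuine gap; both directions of your argument rest on an incorrect form of the Linial--Shraibman bridge and, in the lower-bound direction, also on a quantifier error.

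The bridge you invoke, ``$\gamma_2(A_g)=\Theta(\disc^\lambda(g))$,'' is not the correct statement, and the difference is fatal. The actual Linial--Shraibman relation (Fact 2 in the paper) connects the \emph{dual} norm $\gamma_2^*$ of the \emph{$\lambda$-weighted} sign matrix to discrepancy: $\frac{1}{8\gamma_2^*(A_g\circ P_\lambda)}\leq\disc^\lambda(g)\leq\frac{1}{\gamma_2^*(A_g\circ P_\lambda)}$. The primal $\gamma_2(A_g)$ of the unweighted sign matrix is unrelated to $\disc^\lambda(g)$ --- for instance, $\gamma_2(A_g)$ is $\Theta(\sqrt n)$ for inner product while $\disc^\lambda$ can be moderate --- so the step ``$\gamma_2(B')=O(k)$, essentially $B'=A_g$'' is false, and the entire repair/rescaling construction for the right-hand inequality collapses. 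There is no entry-wise rescaling of a sign matrix $A_g$ that turns knowledge of $\disc^\lambda(g)$ into a primal $\gamma_2$-bound; the correct mechanism is the other Linial--Shraibman dual identity (Fact 1 in the paper), which writes $\gamma_2^\alpha(A)$ as $\max_B \frac{1}{2\gamma_2^*(B)}\left((\alpha+1)\ip{A}{B}-(\alpha-1)\signorm{B}\right)$. Parameterizing $B=A_g\circ P_\lambda$ with $\signorm{B}=1$ gives the exact identity $\gamma_2^\alpha(A_f)=\max_{g,\lambda}\frac{1-(\alpha+1)\lambda(f\neq g)}{\gamma_2^*(A_g\circ P_\lambda)}$, and then both inequalities of the lemma follow by applying the two sides of the $\disc^\lambda$ vs.\ $1/\gamma_2^*$ sandwich.

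Your left-hand inequality has a further structural problem. You want $\tsdisc_{1/(2(\alpha+1))}(f)\leq 2\gamma_2^\alpha(A_f)$, which requires showing $\disc^\lambda(g)\leq 2\gamma_2^\alpha(A_f)$ for \emph{every} admissible pair $(g,\lambda)$. Extracting a single $(g,\lambda)$ from one optimal dual certificate $Q$ of $\gamma_2^\alpha$ cannot bound the maximum over all such pairs. (Separately, the inequality you state at the end of that paragraph, $\disc^\lambda(g)\geq 2\gamma_2^\alpha(A_f)$, points the wrong way for what the lemma asserts.) Again, the exact max-form identity above handles the quantifier correctly: the pairs with $\lambda(f\neq g)\geq 1/(\alpha+1)$ contribute non-positively and drop out for the upper bound, and restricting to $\lambda(f\neq g)<1/(2(\alpha+1))$ makes the numerator at least $1/2$ for the lower bound, all uniformly over $(g,\lambda)$.
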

\begin{proof}
 We have the following facts:
\begin{Fact}[\cite{linial:norms}]
For every sign matrix $A$,
$$\gamma_2^\alpha(A)  = \max_B \frac{1}{2\gamma_2^*(B)} \left((\alpha + 1)\ip{A}{B} - (\alpha-1)\signorm{B} \right) \enspace .$$
Above, $\gamma_2^*(\cdot)$ is the dual norm of $\gamma_2(\cdot)$.
\end{Fact}
\begin{Fact}[\cite{linial:norms}]
Let $A$ be a sign matrix and let $\lambda$ be a distribution. Then,
$$ \frac{1}{8\gamma_2^*(A\circ P_\lambda)} \leq \disc^\lambda(g_A) \leq \frac{1}{\gamma_2^*(A\circ P_\lambda)} \enspace . $$
\end{Fact}
Therefore we have,
\begin{align*}
\gamma_2^\alpha(A_f) & = \max_B \frac{1}{2\gamma_2^*(B)} \left((\alpha+1)\ip{A_f}{B} - (\alpha-1)\signorm{B} \right) \\
 & = \max_{B : \signorm{B} =1} \frac{1}{2\gamma_2^*(B)} \left((\alpha+1)\ip{A_f}{B} - (\alpha-1) \right) \\
   & = \max_{g,\lambda} \frac{1}{\gamma_2^*(A_g \circ P_\lambda)} \left(1 - (\alpha+1)\lambda(f \neq g) \right) \\
   & \leq \max_{g,\lambda} 8 \cdot \disc^\lambda(g) \left(1 - (\alpha+1)\lambda(f \neq g) \right) \\
    & \leq \max \{ 8 \cdot \disc^\lambda(g) : g, \lambda \mbox{ such that } \lambda(f \neq g) <  \frac{1}{\alpha+1} \} \\
    & = 8  \cdot \tsdisc_{\frac{1}{\alpha+1}}(f) \enspace .
\end{align*}
Similarly,
\begin{align*}
\gamma_2^\alpha(A_f) & = \max_{g,\lambda} \frac{1}{\gamma_2^*(A_g \circ P_\lambda)} \left(1 - (\alpha+1)\lambda(f \neq g) \right)  \\
   & \geq \max_{g,\lambda}  \disc^\lambda(g) \left(1 - (\alpha+1)\lambda(f \neq g) \right)  \\
    & \geq \max \{ \frac{1}{2} \cdot \disc^\lambda(g) : g, \lambda \mbox{ such that } \lambda(f \neq g) < \frac{1}{2(\alpha+1) } \} \\
    & =  \frac{1}{2}\cdot \tsdisc_{\frac{1}{2(\alpha+1) }}(f) \enspace .
\end{align*}
\end{proof}

From Lemma~\ref{lem:recgeqdisc} and Lemma~\ref{lem:sdisceqgamma2} we have the following corollary.
\begin{Cor}
Let $f:\mcX\times\mcY \rightarrow \{0,1\}$ be a Boolean function; let $z\in\{0,1\}$; let $\alpha >1, \epsilon>0$. Then,
$$ \tsrec^{z}_{\frac{\epsilon}{2}, \frac{1}{\alpha+1 }}(f) \geq (\frac{1}{2} - \epsilon)\frac{1}{8}\gamma_2^\alpha(A_f)  - \frac{1}{2}  \enspace .$$
\end{Cor}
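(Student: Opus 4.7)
The plan is to derive the corollary by direct chaining of Lemma~\ref{lem:recgeqdisc} and Lemma~\ref{lem:sdisceqgamma2} with the parameter $\delta$ set to match in both bounds. First I would invoke Lemma~\ref{lem:recgeqdisc}, which in its smoothed form states that for any distribution-free parameter $\delta > 0$,
\[
\tsrec^{z}_{\frac{\epsilon}{2},\delta}(f) \;\geq\; \left(\tfrac{1}{2}-\epsilon\right)\tsdisc_{\delta}(f) \;-\; \tfrac{1}{2}.
\]
This is the one-sided "smoothification" of the rectangle--discrepancy comparison, and it holds uniformly in $z \in \{0,1\}$.

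Next I would choose $\delta = \frac{1}{\alpha+1}$ so that the smoothing parameter in $\tsrec^z$ matches the smoothing parameter appearing in the upper bound half of Lemma~\ref{lem:sdisceqgamma2}, namely
\[
\gamma_2^\alpha(A_f) \;\leq\; 8\cdot\tsdisc_{\frac{1}{\alpha+1}}(f),
\]
which rearranges to $\tsdisc_{\frac{1}{\alpha+1}}(f) \geq \tfrac{1}{8}\gamma_2^\alpha(A_f)$. Plugging this into the previous inequality yields
\[
\tsrec^{z}_{\frac{\epsilon}{2},\frac{1}{\alpha+1}}(f) \;\geq\; \left(\tfrac{1}{2}-\epsilon\right)\tsdisc_{\frac{1}{\alpha+1}}(f) \;-\; \tfrac{1}{2} \;\geq\; \left(\tfrac{1}{2}-\epsilon\right)\tfrac{1}{8}\gamma_2^\alpha(A_f) \;-\; \tfrac{1}{2},
\]
which is exactly the claimed inequality.

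There is essentially no hard step here since both ingredients are already established: Lemma~\ref{lem:recgeqdisc} transfers a (smooth-)discrepancy lower bound into a (smooth-)rectangle lower bound with a constant-factor loss, and Lemma~\ref{lem:sdisceqgamma2} gives the two-way equivalence between smooth-discrepancy and $\gamma_2^\alpha$. The only thing to be careful about is matching the smoothing parameters: one must use the "$\gamma_2^\alpha \leq 8\cdot\tsdisc_{1/(\alpha+1)}$" direction (not the other direction from Lemma~\ref{lem:sdisceqgamma2}) so that the resulting rectangle bound is lower-bounded by $\gamma_2^\alpha$ up to the stated multiplicative and additive constants. Since the argument is a two-line substitution, I would simply write the chain of inequalities above and cite the two lemmas.
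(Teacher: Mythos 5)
Your chaining of Lemma~\ref{lem:recgeqdisc} (with $\delta = \tfrac{1}{\alpha+1}$) and the upper-bound direction of Lemma~\ref{lem:sdisceqgamma2} is exactly the argument the paper intends, since the corollary is stated there as a direct consequence of those two lemmas. The proposal is correct and matches the paper's approach.
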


\subsection{Partition bound for relations}
Here we define the partition bound for relations.
\begin{Def}[Partition Bound for relation]
Let $f \subseteq \mcX \times \mcY \times \mcZ$ be a relation. The $\epsilon$-partition bound of $f$, denoted $\prt_\epsilon(f)$, is given by the optimal value of the following linear program.

\vspace{0.2in}

{\footnotesize
\begin{minipage}{3in}
    \centerline{\underline{Primal}}\vspace{-0.1in}
    \begin{align*}
      \text{min:}\quad & \sum_{z} \sum_{R}  w_{z,R} \\
       \quad & \forall (x,y) :  \sum_{R: (x,y) \in R} \quad \sum_{z:(x,y,z) \in f} w_{z,R} \geq 1 - \epsilon,\\
      & \forall (x,y) : \sum_{R: (x,y) \in R} \quad \sum_{z}  w_{z,R} = 1 , \\
      & \forall z , \forall R  : w_{z,R} \geq 0 \enspace .
    \end{align*}
\end{minipage}
\begin{minipage}{3in}\vspace{-0.4in}
    \centerline{\underline{Dual}}\vspace{-0.1in}
    \begin{align*}
      \text{max:}\quad &  \sum_{(x,y)} (1-\epsilon) \mu_{x,y} + \phi_{x,y}\\
       \quad &  \forall z, \forall R : \sum_{(x,y) : (x,y)\in R; (x,y,z) \in f} \mu_{x,y}   + \sum_{(x,y)\in  R} \phi_{x,y}  \leq 1,\\
      & \forall (x,y) : \mu_{x,y} \geq 0, \phi_{x,y} \in \mbR \enspace .
    \end{align*}
\end{minipage}
}
\end{Def}

As in Theorem~\ref{thm:main}, we can show that partition bound is a lower bound on the communication complexity. Its proof is skipped since it is very similar.
\begin{Lem} Let $f \subseteq \mcX \times \mcY \times \mcZ$ be a relation. Then,
$\R^\pub_\epsilon(f) \geq \log \prt_\epsilon(f) \enspace .$
\end{Lem}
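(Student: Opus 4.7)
The plan is to mirror the proof of Part 1 of Theorem~\ref{thm:main}, with the only substantive change being in how the correctness constraint of the primal LP is verified: for a relation, an output $z$ on input $(x,y)$ is correct whenever $(x,y,z)\in f$, rather than only when $z=f(x,y)$. So I would fix a public-coin randomized protocol $\mcP$ for $f$ with worst-case error $\epsilon$ and communication $c \defeq \R^\pub_\epsilon(f)$, and let $\mcP_r$ be the deterministic protocol obtained by setting the public coins to $r$, which occurs with probability $q(r)$. Each $\mcP_r$ partitions $\mcX\times\mcY$ into at most $2^c$ rectangles $\mcR_r$, and on each $R\in\mcR_r$ it outputs a fixed value $z^r_R\in\mcZ$.

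Next I would define the candidate primal solution exactly as in the function case,
\[
 w'_{z,R} \defeq \sum_{r : R\in\mcR_r,\ z^r_R = z} q(r),
\]
so that for every $(x,y,z)$ we have $\Pr[\mcP\text{ outputs }z\text{ on }(x,y)] = \sum_{R:(x,y)\in R} w'_{z,R}$, and $w'_{z,R}\geq 0$ holds by construction. The ``total probability equals $1$'' identity $\sum_{R:(x,y)\in R}\sum_z w'_{z,R}=1$ is immediate, since on every input $\mcP$ outputs some element of $\mcZ$.

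The one place where the relation setting differs is the correctness constraint. Since $\mcP$ errs with probability at most $\epsilon$ on every input, the probability that $\mcP$ outputs a $z$ with $(x,y,z)\in f$ is at least $1-\epsilon$, which translates directly to
\[
 \forall (x,y):\ \sum_{R:(x,y)\in R}\ \sum_{z:(x,y,z)\in f} w'_{z,R} \geq 1-\epsilon,
\]
matching the relation version of the primal constraint. Thus $\{w'_{z,R}\}$ is primal-feasible, and therefore
\[
 \prt_\epsilon(f) \leq \sum_z\sum_R w'_{z,R} = \sum_r q(r)\cdot |\mcR_r| \leq 2^c \sum_r q(r) = 2^c,
\]
which yields $\R^\pub_\epsilon(f) \geq \log \prt_\epsilon(f)$. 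There is no real obstacle here; the only point needing care is recognizing that in the relation LP the correctness constraint aggregates weight across all acceptable outputs $z$ for a given $(x,y)$, which is precisely what ``error $\leq \epsilon$ on every input'' delivers via the union over all $z$ with $(x,y,z)\in f$.
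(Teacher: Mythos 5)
Your proof is correct and is exactly the argument the paper has in mind: it reruns Part~1 of Theorem~\ref{thm:main} verbatim, with the only change being that the error constraint now sums over all $z$ with $(x,y,z)\in f$ rather than the single $z=f(x,y)$. The paper omits the proof for precisely this reason, so your writeup fills the gap faithfully.
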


\subsection{Las Vegas Partition Bound}
In this section we consider the Las Vegas communication complexity. Las Vegas protocols use randomness and for each input they are allowed to output "don't know" with probability $1/2$, however when they do give an answer then it is required to be correct. An equivalent way to view is that these protocols are never allowed to err, but for each input we only count the expected communication (over the coins), instead of the worst case communication (as in deterministic protocols). Below we present a lower bound for Las Vegas protocols via a linear program, which we call the {\em Las Vegas partition bound}. 
\begin{Def}[Las Vegas Partition Bound]
Let $f : \mcX \times \mcY \rightarrow \mcZ$ be a partial function. The Las Vegas-partition bound of $f$, denoted $\prt_{LV}(f)$, is given by the optimal value of the following linear program. Let $\mcR_f$ denote the set of monochromatic rectangles for $f$.

\vspace{0.2in}

{\footnotesize
\begin{minipage}{3in}
    \centerline{ \underline{Primal}}\vspace{-0.2in}
 \begin{align*}
      \text{min:}\quad & \sum_{R \in \mcR_f}  w_{R} + \sum_{R \in \mcR} v_R\\
      \quad & \forall (x,y) \in f^{-1}: \sum_{R \in \mcR_f: (x,y) \in R} w_{R} \geq \frac{1}{2},\\
      & \forall (x,y) : \sum_{R \in \mcR_f: (x,y) \in R} \quad   w_{R}+ \sum_{R: (x,y) \in R}v_R = 1 , \\
      & \forall R  : w_{R},v_R \geq 0 \enspace .\\
    \end{align*}
\end{minipage}
\begin{minipage}{3in}\vspace{-0.2in}
    \centerline{\underline{Dual}}\vspace{-0.2in}
 \begin{align*}
      \text{max:}\quad &  \sum_{(x,y)\in f^{-1}} \frac{1}{2} \cdot \mu_{x,y} + \sum_{(x,y)}\phi_{x,y}\\
      \quad &  \forall R\in \mcR_f : \sum_{(x,y)\in f^{-1}(z)\cap R} \mu_{x,y}   + \sum_{(x,y)\in  R} \phi_{x,y}  \leq 1,\\
            \quad &  \forall R \in \mcR : \sum_{(x,y)\in  R} \phi_{x,y}  \leq 1,\\
      & \forall (x,y) : \mu_{x,y} \geq 0, \phi_{x,y} \in \mbR \enspace .
 \end{align*}
\end{minipage}
}
\end{Def}


The following lemma follows easily using arguments as before. Below $\R_0(f)$ represents the Las Vegas communication complexity of $f$; please refer to~\cite{kushilevitz&nisan:cc} for explicit definition of $\R_0(f)$.
\begin{Lem}Let $f : \mcX \times \mcY \rightarrow \mcZ$ be a partial function. Then, $\R_0(f) \geq \log\prt_{LV}$ . 
\end{Lem}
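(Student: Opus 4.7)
The plan is to follow the template of Part~1 of Theorem~\ref{thm:main}: turn any Las Vegas protocol into a feasible primal solution for $\prt_{LV}(f)$ whose objective is at most $2^{\R_0(f)}$. I will work with the ``worst-case communication, output `don't know' with probability at most $1/2$'' formulation of Las Vegas protocols. This is equivalent to the expected-communication definition up to a factor of $2$ (Markov truncation in one direction and repetition in the other), and therefore only costs a constant additive term in the final log bound.

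Let $\mcP$ be a Las Vegas protocol with worst-case communication $c \defeq \R_0(f)$, and for each random string $r$ occurring with probability $q(r)$, let $\mcP_r$ be the induced deterministic protocol. Each leaf of $\mcP_r$ corresponds to a rectangle which is either labeled with an output $z \in \mcZ$, or labeled ``don't know''; by the zero-error guarantee, any output-labeled rectangle must lie in $\mcR_f$ (otherwise some $(x,y)\in f^{-1}$ would be answered incorrectly with positive probability). Writing $\mcR_r^f$ and $\mcR_r^{dk}$ for the two collections of leaves, define
$$w_R' \defeq \sum_{r:\, R \in \mcR_r^f} q(r) \quad (R \in \mcR_f), \qquad v_R' \defeq \sum_{r:\, R \in \mcR_r^{dk}} q(r) \quad (R \in \mcR).$$

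Feasibility in the primal is then routine: the equality constraint at each $(x,y)$ follows because the leaves of $\mcP_r$ partition $\mcX\times\mcY$, so the total weight on leaves containing $(x,y)$ is $\sum_r q(r)=1$; the $\geq 1/2$ constraint at each $(x,y)\in f^{-1}$ is exactly the condition that the don't-know probability is at most $1/2$; nonnegativity is built in. Finally, the objective is controlled by a worst-case leaf count, using that $\mcP_r$ has at most $2^{c} $ leaves:
$$\sum_{R\in\mcR_f} w_R' + \sum_{R\in\mcR} v_R' \;=\; \sum_r q(r) \cdot |\text{leaves of } \mcP_r| \;\leq\; \sum_r q(r)\cdot 2^c \;=\; 2^{\R_0(f)}.$$
Taking logarithms gives the lemma.

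The only conceptual obstacle is the translation between the expected-communication and worst-case-with-don't-know views of $\R_0$, which is handled by the standard Markov/repetition equivalence cited above. After that, the argument is a direct adaptation of the proof of Part~1 of Theorem~\ref{thm:main}, with the single twist that the ``color'' index $z$ from that proof is replaced by a dichotomy between output-labeled leaves (which are forced to be monochromatic for $f$, i.e., in $\mcR_f$) and don't-know leaves (which may be arbitrary rectangles in $\mcR$).
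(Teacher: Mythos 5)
Your proof is correct and is exactly the ``arguments as before'' adaptation the paper has in mind (the paper omits the proof, deferring to the argument of Part~1 of Theorem~\ref{thm:main}). The key observations you make — that the zero-error guarantee forces every output-labeled leaf into $\mcR_f$, and that splitting the leaf mass into $w'$ (output-labeled, hence monochromatic) and $v'$ (don't-know, arbitrary rectangle) makes all three primal constraints hold with the objective bounded by $\sum_r q(r)\cdot|\text{leaves of }\mcP_r|\leq 2^c$ — are exactly the right twist on the earlier argument. One small imprecision: the translation between the two views of $\R_0$ is not an additive constant in the log but a multiplicative factor of $2$ (Markov truncation at twice the expected communication gives don't-know probability $\leq 1/2$), so under the strict expected-communication definition the conclusion would read $\R_0(f)\geq \tfrac{1}{2}\log\prt_{LV}(f)$; the lemma as stated is exact only under the ``worst-case communication, don't-know with probability at most $1/2$'' formulation, which is the formulation the paper gives first and which you work with, so the argument is fine.
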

Let $\prt^*_{LV}(f)$ be defined similarly to $\prt_{LV}(f)$, except that the constraints $$\forall (x,y) \in f^{-1}: \sum_{R \in \mcR_f: (x,y) \in R} w_{R} \geq 1/2$$ are replaced by $$ \forall (x,y) \in f^{-1}: \sum_{R \in \mcR_f: (x,y) \in R} w_{R} = 1/2 \enspace .$$  Then we can observe $\prt_0(f) \geq \prt^*_{LV}(f) \geq \frac{1}{2}\prt_0(f)$. Note that $\log \prt^*_{LV}(f)$ forms a lower bound for $\R_0(f)$ if there is a Las Vegas protocol for $f$ that has the probability of output 'don't know' for all inputs. 

\subsection{Separations between bounds}
In this section, we discuss some separations between some of the bounds we mentioned.
\begin{Thm}
\begin{enumerate}
\item $\log \prt_\epsilon(\disj)\geq\log\rec_\epsilon(\disj)=\Omega(n)$, while $\log \gamma_2^\alpha(\disj)=O(\sqrt n)$ for all $\epsilon<1/2$ and $\alpha>1$.
\item There is a function $f:\{0,1\}^n\times \{0,1\}^n\to\{0,1\}$ such that $\log \prt_\epsilon(f)\geq\log\rec_\epsilon(f)=\Omega(n)$, while $\log\rank(f)=O(n^{0.62})$ for all $\epsilon<1/2$.
\item Let the function $\LNE:\{0,1\}^{n^2}\times \{0,1\}^{n^2}\to\{0,1\}$ be defined as 
$$\LNE(x_1,\ldots,x_n;y_1,\ldots,y_n)=1\iff \forall i:x_i\neq y_i$$, where all $x_i,y_j$ are strings of length $n$. Then $\D(\LNE)= \rank(\LNE)= n^2$, however $\R_0(\LNE) = O(n)$ and $\log\prt_0(\LNE)  =O(n)$. 
\end{enumerate} \label{thm:sep}
\end{Thm}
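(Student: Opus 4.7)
The plan is to tackle the three parts in sequence, leveraging Theorem~\ref{thm:main} and known results on specific functions.

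For part 1, the rectangle-bound lower bound $\log \rec_\epsilon(\disj) = \Omega(n)$ is Razborov's classical result on set disjointness. Combined with Theorem~\ref{thm:main} parts 2--3, which give $\prt_\epsilon \geq \srec_\epsilon \geq \rec_\epsilon$, this yields $\log \prt_\epsilon(\disj) = \Omega(n)$. For the $\gamma_2^\alpha$ upper bound, I would invoke the Linial--Shraibman fact that $\log \gamma_2^\alpha(A_f)$ lower bounds the quantum communication complexity of $f$ at error $1/(2(\alpha+1))$ up to constant factors, together with the Aaronson--Ambainis quantum protocol for $\disj$ with cost $O(\sqrt n)$; together these give $\log \gamma_2^\alpha(A_{\disj}) = O(\sqrt n)$.

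For part 2, the plan is to invoke a Nisan--Wigderson style construction of a Boolean function $f$ satisfying $\D(f) = \Omega(n)$ with $\log \rank(f) = O(n^{\log_3 2 + o(1)}) = O(n^{0.62})$. Such constructions compose a hard inner gadget (which can be taken to embed a disjointness-like combinatorial structure) via a rank-reducing outer composition. The plan is to then argue that the rectangle bound, and not merely the deterministic complexity, is preserved under this composition: since disjointness has randomized rectangle bound $\Omega(n)$ and the outer composition is direct-product-like, the inner rectangle-bound hardness lifts to the composed function. This lifting step for randomized rather than deterministic hardness is the main obstacle of the entire theorem.

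For part 3, begin with the structural observation $M_\LNE = M_{\mathrm{NE}}^{\otimes n}$, where $M_{\mathrm{NE}} = J - I$ is the $2^n \times 2^n$ matrix of non-equality on $n$-bit strings and has rank $2^n$. Thus $\rank(M_\LNE) = 2^{n^2}$, so $\log\rank(\LNE) = n^2$; combined with the log-rank lower bound on $\D$ and the trivial upper bound (Alice transmits all of $x$) this gives $\D(\LNE) = \log\rank(\LNE) = n^2$ (reading the stated equality as concerning $\log \rank$). For $\R_0(\LNE) = O(n)$, I would describe a Las Vegas protocol based on iterated one-bit hashing: in each round, Alice sends a single public-random hash bit of $x_i$ for every coordinate $i$ not yet confirmed to satisfy $x_i \neq y_i$, and coordinate $i$ is confirmed once some round's hash bit mismatches. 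The count of unconfirmed ``truly unequal'' coordinates halves per round in expectation, so the total hashing cost telescopes to $O(n)$; after $O(\log n)$ rounds only coordinates with $x_i = y_i$ (if any) are likely to remain, and the protocol fully transmits $x_i$ for one such coordinate ($n$ bits) to certify $\LNE = 0$, or concludes $\LNE = 1$ if no coordinate remains. Finally, $\log \prt_0(\LNE) = O(n)$ follows by augmenting the protocol with a small positive ``don't know'' probability on every input, invoking the Las Vegas partition bound lemma to obtain $\prt^*_{LV}(\LNE) \leq 2^{O(n)}$, and then using $\prt_0 \leq 2\,\prt^*_{LV}$ from the paper.
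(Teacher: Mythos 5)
Your part~1 follows the paper exactly: Razborov for the $\Omega(n)$ rectangle/corruption lower bound on $\disj$, and the Aaronson--Ambainis $O(\sqrt n)$ quantum protocol (via the fact that $\log\gamma_2^\alpha$ lower bounds quantum communication) for the upper bound. No issues there.

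For part~2 the paper does nothing more than cite Nisan--Wigderson and rely on the fact that the function they construct already has randomized rectangle-bound hardness $\Omega(n)$ as part of (or as an easy extension of) their analysis. Your instinct that there is something to check here --- that deterministic hardness does not automatically give $\rec_\epsilon$ hardness --- is sound, but your proposed fix is not what the paper does and is not necessary: you do not need to build a new function by composing a disjointness gadget and then ``lift'' randomized hardness through the composition; you should just verify that the NW construction's lower-bound argument is of corruption type (it is) and therefore already lower bounds $\rec_\epsilon$. The lifting step you describe as ``the main obstacle of the entire theorem'' is a self-inflicted obstacle.

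For part~3 you have spotted the paper's $\rank$ vs.\ $\log\rank$ typo correctly and the $\R_0(\LNE)=O(n)$ hashing protocol is the standard one from Kushilevitz--Nisan, so those pieces are fine. However, your derivation of $\log\prt_0(\LNE)=O(n)$ has a genuine gap. You route through the Las Vegas partition bound: from $\R_0=O(n)$ you want $\prt^*_{LV}(\LNE)\leq 2^{O(n)}$ and then $\prt_0\leq 2\,\prt^*_{LV}$. But the unconditional lemma only gives $\R_0\geq\log\prt_{LV}$, and $\prt_{LV}\leq\prt^*_{LV}$, so an upper bound on $\prt_{LV}$ does not control $\prt^*_{LV}$. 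To bound $\prt^*_{LV}$ you would need (as the paper remarks) a protocol whose ``don't know'' probability is \emph{exactly} $1/2$ on every input; the Markov-truncated hashing protocol only gives ``at most $1/2$,'' with the exact value depending on $(x,y)$, and you cannot rebalance it without the parties knowing that input-dependent probability. Your phrase ``augmenting the protocol with a small positive `don't know' probability'' does not resolve this, since a fixed extra rejection probability still leaves the total input-dependent. The paper sidesteps the whole issue: it constructs an explicit feasible primal solution to $\prt_0$ directly, assigning weights to two families of monochromatic rectangles defined by inner-product hashes (one family for the $1$-inputs, one indexed by the first block $k+1$ where $x_{k+1}=y_{k+1}$ for the $0$-inputs), and checks the equality constraint and a total weight of $2^{O(n)}$ by a direct counting argument. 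You should either reproduce a construction of that kind, or carefully build a Las Vegas protocol whose ``don't know'' probability is the same constant on all inputs before invoking the $\prt^*_{LV}$ connection.
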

\begin{proof}
\begin{enumerate}
\item The lower bound is from \cite{razborov:disj}, the upper bound follows from \cite{aaronson:search}.
\item The function is described in \cite{nisan&wigderson:rank}.
\item 
The lower bound $\D(\LNE)=n^2$ is shown in~\cite{kushilevitz&nisan:cc} where it was shown that $\log \rank(\LNE) = n^2$. It is not hard to see that the Las Vegas complexity of $\LNE$ is $O(n)$ which is also shown in~\cite{kushilevitz&nisan:cc}. 

In order to show $\log\prt_0(\LNE)  =O(n)$, we describe a solution to the primal program for the partition bound for $\LNE$. We will assign a positive weight $w_R$, to every monochromatic rectangle $R$ such that the sum of weights is small. In this case one can set $w_{z,R} \defeq w_R$ where $z$ is the color of the monochromatic rectangle $R$ (all other $w_{z',R}$ are $0$). 

We present the analysis below assuming that none of $x_1\ldots x_n, y_1\ldots y_n$ is $0^n$. The analysis can be extended easily if such is the case. 

First we consider the 1-inputs of $\LNE$. Let $R_{z_1,\ldots,z_n,s_1,\ldots,s_n}$ be the rectangle that contains all inputs with $\sum_j x_i(j)\cdot z_i(j)=s_i\mod 2$ and  $\sum_j y_i(j)\cdot z_i(j)\neq s_i\mod 2$ for all $i$. Note that these are 1-chromatic rectangles. We give weight $2^n/2^{n^2}$ to each such rectangle. For every 1-input $x_1,\ldots,x_n;y_1,\ldots,y_n$ and all $s_1,\ldots, s_n$

\[\Pr_{z_1\ldots z_n}\left(\sum_j x_i(j)\cdot z_i(j)=s_i\mod 2\wedge\sum_j y_i(j)\cdot z_i(j) \neq s_i\mod 2\mbox{ for all }   i\right)=1/4^n\] for uniform $z_1,\ldots,z_n$.
Hence \[\sum w_{R_{z_1,\ldots,z_n,s_1,\ldots,s_n}}=2^n\cdot\frac{2^{n^2}}{4^n}\cdot\frac{2^n}{2^{n^2}}=1,\] when the sum is over all $R_{z_1,\ldots,z_n,s_1,\dots,s_n}$ consistent  with   $x_1,\ldots,x_n;y_1,\ldots,y_n$ . The sum of the weights $w_{R_{z_1,\ldots,z_n,s_1,\ldots,s_n}}$ of all such rectangles is exactly $2^{2n}$.

Now we turn to the 0-inputs. For each of them there is a position $k+1$, where $x_{k+1}=y_{k+1}$ but $x_i\neq y_i$ for all $i\leq k$. Let $R_{z_1,\ldots, z_k,s_1,\ldots, s_k,u}$ denote the rectangle that contains all inputs with $\sum_j x_i(j)\cdot z_i(j)=s_i\mod 2$ and  $\sum_j y_i(j)\cdot z_i(j)\neq s_i\mod 2$ for all $i\leq k$ and $x_{k+1}=y_{k+1}=u$. The rectangle $R_{z_1,\ldots, z_k,s_1,\ldots, s_k,u}$ receives weight $2^{k}/2^{nk}$. As before it can be argued that every 0-input lies in $2^{nk}/2^k$ such rectangles, so the constraints are satisfied. The overall sum of rectangle weights is at most
\[\sum_{k=0}^{n-1}2^{kn}\cdot{2^k}\cdot2^n\cdot \frac{2^k}{2^{kn}}\leq 2\cdot 2^{3n}.\]
Hence $\log \prt_0(\LNE) \leq \log \sum_{R \in \mcR_{\LNE}} w_R = O(n)$.
\end{enumerate}
\end{proof}

\section{Query Complexity Bounds}
In this section we define the partition bound for query complexity and also other previously known bounds.
\subsection{Definitions}
Let $f : \{0,1\}^n \rightarrow \{0,1\}^m$ be a function. Henceforth all functions considered are partial unless otherwise specified. An {\em assignment} $A : S \rightarrow \{0,1\}^m$ is an assignment of values to some subset $S$ of $n$ variables. We say that $A$ is {\em consistent} with $x\in\{0,1\}^n$ if $x_i= A(i)$ for all $i\in S$. We write $x\in A$ as shorthand for '$A$ is consistent with $x$'. We write $|A|$ to represent the size of $A$ which is the cardinality of $S$ (not to be confused with the number of consistent inputs). Furthermore we say that an index $i$ {\em appears} in $A$, iff $i \in S$ where $S$ is the subset of $[n]$ corresponding to $A$.
Let $\mcA$ denote the set of all assignments.  Below we assume $x \in \{0,1\}^n $, $A \in \mcA$ and $z \in \{0,1\}^m$, unless otherwise specified.

\subsubsection{Partition Bound}
\begin{Def}[Partition Bound]
Let $f : \{0,1\}^n \rightarrow \{0,1\}^m$ be a function and let $\epsilon \geq 0$. The $\epsilon$-partition bound of $f$, denoted $\prt_\epsilon(f)$, is given by the optimal value of the following linear program.

\vspace{0.2in}

{\footnotesize
\begin{minipage}{3in}
    \centerline{\underline{Primal}}\vspace{-0.2in}
    \begin{align*}
      \text{min:}\quad & \sum_{z} \sum_{A}  w_{z,A} \cdot 2^{|A|} \\
       \quad &  \forall x \in f^{-1} : \sum_{A: x \in A} w_{f(x),A} \geq 1 - \epsilon,\\
      & \forall x : \sum_{A: x \in A} \quad \sum_{z}  w_{z,A} = 1 , \\
      & \forall z, \forall A : w_{z,A} \geq 0 \enspace .
    \end{align*}
\end{minipage}
\begin{minipage}{3in}\vspace{-0.3in}
    \centerline{\underline{Dual}}\vspace{-0.2in}
        \begin{align*}
      \text{max:}\quad &  \sum_{x \in f^{-1}} (1-\epsilon) \mu_{x} + \sum_{x} \phi_{x}\\
       \quad &   \forall A, \forall z  : \sum_{x\in f^{-1}(z)\cap A} \mu_{x}   + \sum_{x \in  A} \phi_{x}  \leq 2^{|A|},\\
      & \forall x : \mu_{x} \geq 0, \phi_{x} \in \mbR \enspace .
    \end{align*}
\end{minipage}
}
\end{Def}

\subsubsection{Known Bounds}

In this section we define some known complexity measures of functions. All of these except the (errorless) certificate complexity are lower bounds for randomized query complexity. See the survey by Buhrman and de Wolf \cite{buhrman:dectreesurvey} for further information.

\begin{Def}[Certificate Complexity]
For $z\in\{0,1\}^m$, a $z$-certificate for $f$ is an assignment $A$ such that $x\in A \Rightarrow f(x)=z$.
The certificate complexity $\C_x(f)$ of $f$ on $x$ is the size of the smallest $f(x)$-certificate that is consistent with $x$. The certificate complexity of $f$ is $\C(f) \defeq \max_{x\in f^{-1}} \C_x(f)$. The $z$-certificate complexity of $f$ is  $\C^z(f) \defeq \max_{x : f(x)=z} \C_x(f)$.
\end{Def}

\begin{Def}[Sensitivity and Block Sensitivity]
For $x\in\{0,1\}^n$ and  $S\subseteq [n]$, let $x^S$ be $x$ flipped on locations in $S$. The sensitivity $\s_x(f)$ of $f$ on $x$ is the number of different $i\in[n]$ for which $f(x)\neq f(x^{\{i\}})$. The sensitivity of $f$ is $\s(f)\defeq \max_{x\in f^{-1}} \s_x(f)$.

The block sensitivity $\bs_x(f)$ of $f$ on $x$ is the maximum number $b$ such that there are disjoint sets $B_1, \ldots, B_b$ for which $f(x) \neq f(x^{B_i})$. The block sensitivity  of $f$ is $\bs(f) \defeq \max_{x\in f^{-1}} \bs_x(f)$. If $f$ is constant, we define $\s(f) = \bs(f) =0$.
It is clear from definitions that $\s(f) \leq \bs(f)$.
\end{Def}

\begin{Def}[Randomized Certificate Complexity~\cite{aaronson:cert}]
A $\epsilon$-error randomized verifier for  $x\in\{0,1\}^n$ is a randomized
algorithm that, on input $y\in\{0,1\}^n$, queries $y$ and (i) accepts
with probability $1$ if $y=x$, and (ii) rejects with probability
at least $1 - \epsilon$ if $f(y) \neq f(x)$. If $y \neq x$ but
$f(y) = f(x)$, the acceptance probability can be arbitrary.
Then $\RC^x_\epsilon(f)$ is the maximum number of
queries used by the best $\epsilon$-error randomized verifier for $x$, and $\RC_\epsilon(f) \defeq \max_{x\in f^{-1}} \RC^x_\epsilon(f)$.
\end{Def}
The above definition is stronger than the one in~\cite{aaronson:cert}.
\begin{Def}[Approximate Degree]
Let $f : \{0,1\}^n \rightarrow \{0,1\}$ be a Boolean function and let $\epsilon  > 0$. A polynomial $\mbR^n \rightarrow \mbR$ is said to $\epsilon$-approximate $f$, if
$|p(x) -f(x)| < \epsilon$ for all $x \in f^{-1}$ and $0\leq p(x) \leq 1$ for all $x\in \{0,1\}^n$. The $\epsilon$-approximate degree $\tdeg_\epsilon(f)$ of $f$ is the minimum degree among all multi linear polynomials that $\epsilon$-approximate $f$. If $\epsilon=0$ we write $\edeg(f)$.
\end{Def}

\begin{Def}[Classical Adversary Bound]
Let  $f : \{0,1\}^n \rightarrow \{0,1\}^m$ be a function. Let $p = \{p_x: x \in \{0,1\}^n, p_x \mbox{ is a probability distribution on } [n] \}$. The classical adversary bound for $f$ denoted $\cadv(f)$, is defined as
\[\cadv(f) \defeq \min_{p} \max_{x,y:f(x)\neq f(y)}\frac{1}{\sum_{i:x_i\neq y_i}\min\{p_x(i),p_y(i)\} } \enspace .\]
\end{Def}

The classical adversary bound is defined in an equivalent but slightly different way by Laplante and Magniez \cite{laplante:kolmogorov}; the above formulation appears in their proof and is made explicit in \cite{spalek:equivalent}. Aaronson \cite{aaronson:local} defines a slightly weaker version as observed in \cite{laplante:kolmogorov}. Laplante and Magniez do not show an general upper bound for the classical adversary bound, but it is easy to see that $\cadv(f)=O(\C(f))$ for all total functions.

\begin{Def}[Quantum Adversary Bound]
Let  $f : \{0,1\}^n \rightarrow \mcZ$ be a function. Let $\Gamma$ be a Hermitian matrix whose rows and columns are labeled by elements in $\{0,1\}^n$, such that $\Gamma(x,y) = 0$ whenever $f(x) \neq f(y)$. For $i \in [n]$, let $D_i$ be a Boolean matrix whose rows and columns are labeled by elements in $\{0,1\}^n$, such that $D_i(x,y) = 1$ if $x_i \neq y_i$ and
$D_i(x,y) = 0$ otherwise. The {\em quantum adversary bound} for $f$, denoted $\adv(f)$ is defined as
$$ \adv \defeq \max_{\Gamma \neq 0} \frac{\norm{\Gamma}}{\max_i \norm{\Gamma \circ D_i}} \enspace .$$
\end{Def}

\suppress{
\begin{Def}[Classical Adversary Bound] Let $f : \{0,1\}^n \rightarrow \{0,1\}$ be a Boolean function. Let $\lambda$ be a distribution on $f^{-1}(0) \times f^{-1}(1)$. For $x \in f^{-1}(0)$ and $i \in [n]$, let
$$ \theta_\lambda(x,i) \defeq  \frac{\sum_{y: y_i\neq x_i} \lambda_{x,y}}{ \sum_{y}\lambda_{x,y}}  = \Pr_{y \leftarrow \lambda_x}[y_i \neq x_i ]\enspace .$$
For $y \in f^{-1}(1)$ and $i \in [n]$, let
$$ \theta_\lambda(y,i) \defeq  \frac{\sum_{x: x_i\neq y_i} \lambda_{x,y}}{ \sum_{x}\lambda_{x,y}}  = \Pr_{x \leftarrow \lambda_y}[y_i \neq x_i ]\enspace .$$
The classical adversary bound of $f$, denoted $\cadv(f)$ is defined as:
$$ \cadv(f) \defeq \min_{\lambda} \quad \max_{x\in f^{-1}(0), y\in f^{-1}(1), i \in [n]: x_i\neq y_i} \quad \min\{\theta_\lambda(x,i), \theta_\lambda(y,i)\} \enspace .$$
\end{Def}
}

\subsection{Comparison between bounds}

The following theorem captures the key relations between the above bounds. Below $\R_\epsilon(f)$ denotes the $\epsilon$-error randomized query complexity of $f$.
\begin{Thm}Let $f : \{0,1\}^n \rightarrow \{0,1\}^m$ be a function, then
\begin{enumerate}
\item  $\R_\epsilon(f) \geq \frac{1}{2} \log \prt_\epsilon(f)$.
\item  $\log \prt_0(f) \geq \C(f) $.
\item Let $\epsilon < 1/2$, then $ \log \prt_{\frac{\epsilon}{4}}(f)  \geq   \epsilon \cdot \bs(f) + \log \epsilon -2$.
\item $ \log \prt_{\epsilon}(f)  \geq   \RC_{\frac{2\epsilon}{1-2\epsilon}}(f) + \log \epsilon $.
\item $\log \prt_{\epsilon}(f) \geq (1-4\epsilon)\cdot \cadv(f) + \log \epsilon$.
\item Let $f : \{0,1\}^n \rightarrow \{0,1\}$ be a  Boolean function. Then, $\log \prt_\epsilon(f) \geq  \tdeg_{2\epsilon}(f) + \log \epsilon$.
\item Let $f : \{0,1\}^n \rightarrow \{0,1\}$ be a  Boolean function. Then, $\D(f) = O(\log \prt_0(f) \cdot \log \prt_{1/3}(f))$ and  $\D(f) = O(\log \prt_{1/3}(f)^3)$, where $\D(f)$ represents the deterministic query complexity of $f$.
\end{enumerate} \label{thm:mainquery}
\end{Thm}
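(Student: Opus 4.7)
The overall plan is, for each of parts~1--6, either to exhibit a feasible primal for $\prt$ of small cost built from the algorithmic object (part~1), or to convert an assumed $\prt$ primal $\{w_{z,A}\}$ of objective $W$ into a witness showing that the other complexity measure is at most $\log(W/\epsilon)$ (parts~2--6); part~7 then follows from parts~2,~3 and~6 using classical total-function relations. Two ingredients are used throughout. First, any primal induces for each $x\in f^{-1}$ a distribution $q_x$ on assignments defined by $q_x(A)\propto w_{f(x),A}$ and supported on $\{A:x\in A\}$; its unnormalised mass $W_x:=\sum_{A\ni x}w_{f(x),A}\ge 1-\epsilon$, and combining the equality and error constraints at any $y$ with $f(y)\ne f(x)$ yields $\sum_{A\ni y}w_{f(x),A}\le\epsilon$. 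Second, since $\sum_{z,A}w_{z,A}2^{|A|}=W$, Markov's inequality gives $\sum_{|A|>q}w_{z,A}\le W/2^q$ for every fixed $z$, so truncating to $|A|\le q:=\log(W/\epsilon)$ loses at most $\epsilon$ mass per colour.

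For part~1, a randomized $\epsilon$-error $q$-query algorithm is a distribution $p$ over depth-$q$ deterministic trees; setting $w_{z,A}:=\sum_T p(T)\mathbf{1}[(A,z)\text{ is a leaf of }T]$ is primal-feasible, and the objective is $\sum_T p(T)\sum_{\mathrm{leaves}(T)}2^{|A|}\le 4^q$ since any depth-$q$ tree has at most $2^q$ leaves each of size $\le q$. For part~2 the $\epsilon=0$ primal forces $w_{z,A}>0\Rightarrow A$ is a $z$-certificate of $f$; picking $x^*$ with $\C_{x^*}(f)=\C(f)$, every $A$ contributing to $\sum_{A\ni x^*}w_{f(x^*),A}\ge 1$ has $|A|\ge\C(f)$, hence $W\ge 2^{\C(f)}$.

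For part~6 I define the multilinear polynomial $p(x):=\sum_A w_{1,A}\prod_{i\in S_A}[x_i=A(i)]$, where $S_A$ is the variable set of $A$; its degree is at most $\max|A|$, primal feasibility gives $0\le p\le 1$ and $|p-f|\le\epsilon$ on $f^{-1}$ directly, and Markov truncation to $|A|\le q$ costs another $\epsilon$, so $\tdeg_{2\epsilon}(f)\le q$. For part~4, the truncated $q_{x^*}$ becomes a randomized verifier for $x^*$ using at most $q$ queries: $x^*$ is always accepted, while on $y$ with $f(y)\ne f(x^*)$ the acceptance probability is at most $(\sum_{A:y\in A}w_{f(x^*),A})/W_{x^*}^{\mathrm{trunc}}\le\epsilon/(1-2\epsilon)\le 2\epsilon/(1-2\epsilon)=\delta$. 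For part~3, take $x^*$ achieving $\bs_{x^*}(f)=b$ with witnessing disjoint blocks $B_i$: the $x^{*B_i}$-error constraint gives $\Pr_{A\sim q_{x^*}}[x^{*B_i}\in A]\le(\epsilon/4)/W_{x^*}\le\epsilon/2$; since $x^{*B_i}\in A$ iff $A$ queries no variable of $B_i$, linearity yields $\mbE_{q_{x^*}}|A|\ge(1-\epsilon/2)b$, and Jensen applied to $2^{|A|}$ together with $\mbE2^{|A|}\le W/W_{x^*}$ gives $\log W\ge(1-\epsilon/2)b-O(1)$, which comfortably implies the stated inequality.

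For part~5 the plan is to set $p_x(i):=\Pr_{A\sim q_x}[i\in A]/\mbE_{q_x}|A|$, a distribution on $[n]$ with $\mbE_{q_x}|A|\le\log(W/(1-\epsilon))$ by Jensen. The part~4 verifier analysis shows that for $f(x)\ne f(y)$ and $D:=\{i:x_i\ne y_i\}$ one has $\sum_{i\in D}\Pr_{q_x}[i\in A]\ge 1-\epsilon/(1-\epsilon)$, and symmetrically for $q_y$. The main obstacle, and the step I expect to require the most care, is converting these two one-sided estimates into a lower bound on $\sum_{i\in D}\min(p_x(i),p_y(i))$, since $\min$ does not interact linearly with one-sided bounds; the plan is to couple the sampling of $q_x$ and $q_y$ so that mass missed by one side is charged to the other, absorbing the resulting gap into the factor $1-4\epsilon$ rather than $1-2\epsilon/(1-2\epsilon)$. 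Finally, part~7 is a corollary of parts~2,~3,~6: they give $\C(f)\le\log\prt_0(f)$ and $\bs(f)=O(\log\prt_{1/3}(f))$, so combining with the classical inequalities $\D(f)=O(\C(f)\cdot\bs(f))$ and $\C(f)\le\bs(f)^2$ (for total Boolean $f$) yields $\D(f)=O(\log\prt_0\cdot\log\prt_{1/3})$ and $\D(f)=O((\log\prt_{1/3})^3)$ respectively.
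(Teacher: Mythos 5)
Parts 1, 2, 4, 6, and 7 of your proposal follow essentially the same route as the paper: convert a primal solution of objective $W$ into the target combinatorial witness after a Markov truncation to $|A|\le\log(W/\epsilon)$, and conversely build a primal from a randomized decision tree. Your part 3, however, is a genuinely different argument. The paper exhibits an explicit dual witness ($\mu_{x^*}=2^{\epsilon b-1}$, $\phi_{x^*}=-(1-\epsilon)\mu_{x^*}$, and matching values on the sensitive inputs $x^{*B_i}$), which directly yields $\log\prt_{\epsilon/4}(f)\geq \epsilon\cdot \bs(f)+\log\epsilon-2$. You instead work on the primal side: bound $\Pr_{A\sim q_{x^*}}[S_A\cap B_i=\emptyset]$ via the error constraint at $x^{*B_i}$, sum over disjoint blocks to get $\mbE_{q_{x^*}}|A|\geq (1-O(\epsilon))\,\bs(f)$, and then use Jensen on $2^{|A|}$. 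This actually produces a quantitatively stronger bound, roughly $\log\prt_{\epsilon/4}(f)\geq(1-\epsilon/2)\bs(f)-1$, which implies the stated one for all $\epsilon<1/2$, and is arguably cleaner; the paper's dual witness is of independent pedagogical interest but is not needed.

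Part 5 has a genuine gap that you yourself flag. The two one-sided estimates $\sum_{i\in D}\Pr_{q_x}[i\in A]\geq 1-O(\epsilon)$ and the symmetric one for $q_y$, with $D=\{i:x_i\neq y_i\}$, cannot by themselves lower-bound $\sum_{i\in D}\min\{p_x(i),p_y(i)\}$: take $D$ of size two with the $x$-mass concentrated on one coordinate and the $y$-mass on the other, so both sums are $1$ yet the min-sum is $0$. Some additional constraint that ties the two sides together is required. The paper's ingredient, which your ``couple the sampling'' sketch does not supply, is the \emph{hybrid input}: define $r$ by $r_i=x_i$ when the $x$-side coordinate probability dominates and $r_i=y_i$ otherwise, and observe that if $\sum_{i\in D}\min\{\cdot,\cdot\}$ were too small then both the $x$-supported and the $y$-supported assignments (after truncation) would mostly be consistent with $r$, forcing $\sum_{A\ni r}\sum_z w_{z,A}>1$ and contradicting the \emph{equality} constraint at $r$. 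In other words the argument needs all three constraint families at $x$, $y$, and $r$, not just the first two; without naming $r$ and invoking the equality constraint there, the min-sum bound does not follow. Everything else in your write-up, including deriving part 7 from parts 2, 3 via $\D(f)=O(\C(f)\bs(f))$ and $\D(f)=O(\bs(f)^3)$, matches the paper.
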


\begin{proof}
\begin{enumerate}
\item Let $\{w_{z,A}\}$ be an optimal solution to the primal of $\prt_\epsilon(f)$. Let $\mcP$ be a randomized algorithm which achieves $\R_\epsilon(f)$. Then $\mcP$ is a convex combination of deterministic algorithms where each deterministic algorithm is a decision tree of depth at most $\R_\epsilon(f)$. As in the proof of Part 1. of Theorem~\ref{thm:main}, we can argue that $\sum_{z} \sum_{A}  w_{z,A}  \leq  2^{\R_\epsilon(f)}$.   Now since for each $A$ above $|A|\leq \R_\epsilon(f)$,
\begin{align*}
\prt_\epsilon(f) = \sum_{z} \sum_{A}  w_{z,A} 2^{|A|}  \leq 2^{\R_\epsilon(f)} \left(\sum_{z\in \{0,1\}^m} \sum_{A\in \mcA}  w_{z,A} \right)   \leq 2^{2\R_\epsilon(f)} \enspace .
\end{align*}
Hence our result.
\item Let $\{w_{z,A}\}$ be an optimal solution to the primal of $\prt_0(f)$. It is easily observed that $w_{z,A} > 0$ implies that $A$ is a $z$-certificate. Fix $x\inf^{-1}$, now
\begin{align*}
\prt_0(f) = \sum_{z} \sum_{A}  w_{z,A} \cdot 2^{|A|}  \geq \sum_{A: x\in A}  w_{f(x),A} \cdot 2^{|A|} & \geq 2^{\C_x(f)} \cdot \left( \sum_{A: x\in A}  w_{f(x),A} \right) \\
& = 2^{\C_x(f)} \enspace .
\end{align*}
Hence $\log \prt_0(f) \geq \max_{x\in f^{-1}}\{\C_x(f)\} = \C(f)$.
\item Fix  $x\in f^{-1}$. Let $b \defeq \bs_x(f)$ and let $B_1, \ldots, B_b$ be the blocks for which $f(x) \neq f(x^{B_i})$. Let $\mu_x \defeq 2^{\epsilon b-1}; \phi_x \defeq -(1-\epsilon)\mu_x$ and for each $i \in[b]$, let $-\phi_{x^{B_i}} = \mu_{x^{B_i}} \defeq \frac{2^{\epsilon b-1}}{b}; $. Let $\phi_y = \mu_y \defeq 0$ for $y \notin \{x,x^{B_1}, \ldots, x^{B_b}\}$.
\begin{enumerate}
\item Let $|A| \geq \epsilon b$. It is clear that $\forall z\in\{0,1\}^m  : \sum_{x'\in f^{-1}(z)\cap A} \mu_{x'}   + \sum_{x' \in  A} \phi_{x'}  \leq 2^{\epsilon b} \leq 2^{|A|}$.
\item Let $|A| < \epsilon b$. Let $z \neq f(x)$ or $x\notin A$. It is clear that $\sum_{x'\in f^{-1}(z)\cap A} \mu_{x'}   + \sum_{x' \in  A} \phi_{x'}  \leq 0 \leq 2^{|A|}$.
\item Let $|A| < \epsilon b$ and $z = f(x)$ and $x\in A$. Since at most $\epsilon b$ blocks among $B_1, \ldots, B_b$ can have non-empty intersection with the subset $S \subseteq [n]$ corresponding to $A$, at least $(1-\epsilon) b$ among $\{x^{B_1}, \ldots, x^{B_b}\}$ belong to $A$; therefore (since $\epsilon < 0.5$)
\begin{align*}
\sum_{x'\in f^{-1}(z)\cap A} \mu_{x'}   + \sum_{x' \in  A} \phi_{x'}  & \leq  \epsilon \cdot 2^{\epsilon b-1}  - (1-\epsilon)b \frac{2^{\epsilon b-1}}{b}  < 0 \leq  2^{|A|}.
\end{align*}
\end{enumerate}
Therefore the constraints for $\prt_{\frac{\epsilon}{4}}(f)$ are satisfied. Now,
\begin{align*}
\prt_{\frac{\epsilon}{4}}(f) \geq  \sum_{x} (1-\frac{\epsilon}{4}) \mu_{x} + \phi_{x} = (1-\frac{\epsilon}{4})2^{\epsilon b}  - (2-\epsilon) 2^{\epsilon b-1} = \epsilon 2^{\epsilon b-2} \enspace .
\end{align*}
Hence our result.
\item Let $\{w_{z,A}\}$ be an optimal solution to the primal of $\prt_\epsilon(f)$. Let $\alpha \defeq  \sum_{z} \sum_{A}  w_{z,A} \cdot 2^{|A|}$. Let $\mcA' \defeq \{ A : |A| \leq \log \frac{\alpha}{\epsilon} \}$. Then $\sum_{z} \sum_{A \notin \mcA'} w_{z,A}   \leq \epsilon$. Fix $x\in f^{-1}$. Let $\mcA'_x \defeq \{A\in\mcA' : x\in A\}$. We know that $$\alpha_x \defeq \sum_{A \in \mcA'_x} w_{f(x),A} \geq \sum_{A:x\in A} w_{f(x),A} - \epsilon \geq 1 - 2\epsilon \enspace .$$ The verifier $V_x$ for $x$ acts as follows:
\begin{enumerate}
\item Choose $A\in \mcA'_x$ with probability $\frac{w_{f(x),A}}{\alpha_x}$.
\item Query locations in $A$.
\item Accept iff locations queried are consistent with $A$. Reject otherwise.
\end{enumerate}
Now it is clear that if the input is $x$ then $V_x$ accepts with probability $1$. Also the number of queries of $V_x$ are at most $\log \frac{\alpha}{\epsilon}$ on any input $y$. Let $y$ be such that $f(y) \neq f(x)$. Let $\mcA'_{x,y} \defeq \{A\in\mcA_x' : y\in A\}$. Then,
$$ \sum_{A \in \mcA'_{x,y}} w_{f(x),A} \leq \sum_{A \in \mcA' : y \in A} \quad \sum_{z\neq f(y)} w_{z,A} \leq \sum_{A \in \mcA : y \in A} \quad \sum_{z\neq f(y)} w_{z,A} + \epsilon \leq
2\epsilon \enspace .$$
Hence $y$ would be accepted with probability at most $\frac{2\epsilon}{\alpha_x} \leq \frac{2\epsilon}{1 - 2\epsilon}$. Hence our result.

\item Let $\{w_{z,A}\}$ be an optimal solution to the primal of $\prt_{\epsilon}(f)$. Let $\alpha \defeq  \sum_{z} \sum_{A}  w_{z,A} \cdot 2^{|A|}$ and $k \defeq \log \frac{\alpha}{\epsilon}$. Let $\mcA' \defeq \{ A : |A| \leq k \}$;  then $\sum_{z} \sum_{A \notin \mcA'} w_{z,A}   \leq \epsilon$. We set $p$ as in the definition of $\cadv$ as follows. For all $x\in f^{-1}$, let $\mcA'_x \defeq \{A\in\mcA' : x\in A\}$. Define distributions $p_x$ on $[n]$ as follows:
\begin{enumerate}
\item Choose $A\in\mcA_x'$ with probability $q(x,A) \defeq \frac{w_{f(x),A}}{\sum_{A'\in \mcA'_x}w_{f(x),A'}}$.
\item Choose $i$ uniformly from the set $\{i: i \text{ appears in }A\}$.
\end{enumerate}
It is easily seen that $p_x$ is a distribution on $[n]$. We will show that
\begin{equation} \max_{x,y: f(x)\neq f(y)} \frac{1}{\sum_{i: x_i\neq y_i} \min\{p_x(i),p_y(i)\}} \leq   \frac{k}{1 - 4\epsilon}, \label{eq:plarge} \end{equation}
which proves our main claim.

Take any $x,y$ such that $ f(x) \neq f(y)$.
Let's define $\forall i \in [n] , q_x(i) \defeq \sum_{A\in \mcA'_x: i \text{ appears in } A} q(x,A)$; similarly define $q_y(i)$. It is clear that $\forall i \in [n]: q_x(i) \geq \frac{p_x(i)}{k} \text { and } q_y(i) \geq \frac{p_y(i)}{k} $. We will show:
\[\sum_{i:x_i \neq y_i} \min\{q_x(i),q_y(i)\}  \geq   1 - 4\epsilon ,\]
which implies~(\ref{eq:plarge}).

Now assume for a contradiction that $\sum_{i:x_i\neq y_i} \min\{q_x(i),q_y(i)\}<  1 - 4\epsilon.$ Consider a hybrid input $r \in \{0,1\}^n$ constructed in the following way:
if $q_x(i)\geq q_y(i)$ then $r_i \defeq x_i$, otherwise $r_i \defeq y_i$.
Now,
\begin{eqnarray*}
\sum_{A: r \in A} \quad \sum_{z}  w_{z,A} &\geq& \sum_{A \in \mcA'_r} \quad \sum_{z}  w_{z,A}  \\
& \geq & \sum_{A\in \mcA'_x} \quad  w_{f(x),A}-\sum_{i:q_x(i)<q_y(i)} q_x(i) + \sum_{A \in \mcA'_y} \quad  w_{f(y),A}  -\sum_{i:q_y(i)\leq q_x(i)} q_y(i)    \\
& \geq & \sum_{A\in \mcA'_x} \quad  w_{f(x),A} + \sum_{A \in \mcA'_y} \quad  w_{f(y),A} -\sum_{i:x_i \neq y_i} \min\{q_x(i),q_y(i)\}    \\
& \geq & \sum_{A: x\in A} \quad  w_{f(x),A} + \sum_{A: y \in A} \quad  w_{f(y),A} -\sum_{i:x_i \neq y_i} \min\{q_x(i),q_y(i)\}  - 2\epsilon  \\
&\geq& 2(1-\epsilon)- (1 - 4\epsilon) - 2\epsilon \quad > \quad  1 \enspace .
\end{eqnarray*}
This contradicts the assumption that $\{w_{z,A}\}$ is a feasible solution to the primal of $\prt_{\epsilon}(f)$.

\item Let $\{w_{z,A}\}$ be an optimal solution to the primal of $\prt_\epsilon(f)$. Let $\alpha \defeq  \sum_{z} \sum_{A}  w_{z,A} \cdot 2^{|A|}$. Let $\mcA' \defeq \{ A : |A| \leq \log \frac{\alpha}{\epsilon} \}$; then $\sum_{z} \sum_{A \notin \mcA'} w_{z,A}   \leq \epsilon$.
For $A\in \mcA'$, let $m_A(x)$ be a monomial which is $1$ iff $x \in A$.  Let $p(x) \defeq \sum_{A \in \mcA'} w_{1,A} \cdot m_A(x)$. Note that the degree of $p(x)$ is at most $\log \frac{\alpha}{\epsilon}$. Now since the constraints of the primal of $\prt_\epsilon(f)$ are satisfied by $\{w_{z,A}\}$, we get,
$$  \forall x \in f^{-1}(1): 1 \geq p(x) = \sum_{A\in\mcA': x \in A} w_{1,A} \geq \sum_{A\in\mcA: x \in A} w_{1,A} - \epsilon \geq 1 - 2\epsilon, $$
and
$$  \forall x \in f^{-1}(0): 0 \leq p(x) = \sum_{A\in\mcA': x \in A} w_{1,A} \leq \sum_{A\in\mcA: x \in A} w_{1,A} + \epsilon \leq  2\epsilon,$$
and
$$\forall x : 0 \leq p(x) \leq 1  \enspace . $$
Therefore $p(x)$, $2\epsilon$-approximates $f$ and hence our result.

\item For a Boolean function $f$, it is known that $\D(f) = O(\C(f)\bs(f))$ and $\D(f) = O(\bs(f)^3)$ (refer to~\cite{buhrman:dectreesurvey}).  The desired result is implied now using earlier parts of this theorem. 
\end{enumerate}
\end{proof}

\subsection{Example: Tribes}
In this section we give an example of applying the partition bound. We
consider the $\tribes$ function $f:\{0,1\}^n\to\{0,1\}$, which is
defined by an AND of $\sqrt n$ ORs of $\sqrt n$ variables
$x_{i,j}$. Note that $\C(f)\leq \sqrt n$, and hence $\cadv(f)\leq
O(\sqrt n)$, and that furthermore $\tdeg_{1/3}(f)$ is known to lie
between $\Omega(n^{1/3})$ and $O(\sqrt n)$. So both of the standard
general purpose lower bound methods cannot handle this problem well.

\begin{Thm}\label{thm:tribes}
Let $f$ be as above and let $\epsilon \in  (0, 1/16)$, then
\[\R_\epsilon(f)\geq \frac{1}{2} \log \prt_\epsilon(f)\geq \Omega(n).\]
\end{Thm}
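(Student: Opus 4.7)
The first inequality $\R_\epsilon(f) \geq \tfrac{1}{2}\log\prt_\epsilon(f)$ is immediate from Theorem~\ref{thm:mainquery}(1). For the second inequality I would exhibit a feasible solution $(\mu_x,\phi_x)_{x\in\{0,1\}^n}$ to the dual LP for $\prt_\epsilon(f)$ with objective value $2^{\Omega(n)}$.

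The starting observation is that each of the $\sqrt n$ groups of $f$ computes $\mathrm{OR}_{\sqrt n}$, whose block sensitivity at $0^{\sqrt n}$ is $\sqrt n$. Applying the witness construction from the proof of Theorem~\ref{thm:mainquery}(3) with $b=\sqrt n$ yields a local dual $(\widehat\mu,\widehat\phi)$ for $\mathrm{OR}_{\sqrt n}$, supported on $\{0^{\sqrt n},e_1,\ldots,e_{\sqrt n}\}$, achieving objective of order $\epsilon\cdot 2^{\epsilon\sqrt n}$. The plan is to lift this local witness to $\tribes$ through a tensor-product construction across the $\sqrt n$ groups, giving objective $\bigl(\Omega(\epsilon\cdot 2^{\epsilon\sqrt n})\bigr)^{\sqrt n}=2^{\Omega(n)}$. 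Concretely, writing $x=(y_1,\ldots,y_{\sqrt n})$ with $y_i\in\{0,1\}^{\sqrt n}$, I would set $\mu_x=\prod_i\widehat\mu(y_i)$ on the support $\{0^{\sqrt n},e_1,\ldots,e_{\sqrt n}\}^{\sqrt n}$ and define $\phi_x$ by an appropriate tensorial combination of $\widehat\mu$ and $\widehat\phi$.

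For any assignment $A=(A_1,\ldots,A_{\sqrt n})$, both $|A|=\sum_i|A_i|$ (so $2^{|A|}=\prod_i 2^{|A_i|}$) and the relation ``$x\in A$'' decouple across groups. The $z=1$ constraint then factors cleanly because $f^{-1}(1)=\prod_i\mathrm{OR}^{-1}(1)$, so $\sum_{x\in f^{-1}(1)\cap A}\mu_x$ splits as a product of per-group sums each bounded by local feasibility, and the global bound $\leq 2^{|A|}$ follows by multiplying.

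The main obstacle is the $z=0$ constraint. Here $f^{-1}(0)=\bigcup_i\{x:\mathrm{OR}(y_i)=0\}$ is a union rather than a product, so $\sum_{x\in f^{-1}(0)\cap A}\mu_x$ does not factor, and one has to expand by inclusion--exclusion over the subset of groups that witness the $0$. This produces cross-terms that could a priori violate the $2^{|A|}$ slack. A careful design of the tensorial $\phi$, exploiting the near-cancellation $\widehat\mu+\widehat\phi\approx 0$ built into the block-sensitivity witness on its heavy vertex $0^{\sqrt n}$, should make these cross-terms telescope into a product of per-group local constraints. The hypothesis $\epsilon<1/16$ is there to give enough margin to run each per-group witness at a slightly larger error (e.g.~$\epsilon'=4\epsilon$ as in Theorem~\ref{thm:mainquery}(3)) while the global error stays at~$\epsilon$, absorbing the inclusion--exclusion losses uniformly over $A$. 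This bookkeeping---verifying feasibility for assignments $A$ that intersect the product support inhomogeneously across the $\sqrt n$ groups---is expected to be the most delicate part of the argument.
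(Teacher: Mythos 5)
The first inequality is correctly reduced to Theorem~\ref{thm:mainquery}(1). For the second inequality, however, your tensor-product plan has a concrete gap that the vague appeal to inclusion--exclusion does not fix. Consider the assignment $A$ that fixes exactly one position to $1$ in each of the $\sqrt n$ blocks and leaves the rest free, so $|A|=\sqrt n$ and $2^{|A|}=2^{\sqrt n}$. The only input in your tensor support $\{0^{\sqrt n},e_1,\ldots,e_{\sqrt n}\}^{\sqrt n}$ consistent with $A$ is the single point $x^\star=(e_{j_1},\ldots,e_{j_{\sqrt n}})$, and $\mu_{x^\star}=\prod_i\widehat\mu(e_{j_i})=(2^{\epsilon\sqrt n-1}/\sqrt n)^{\sqrt n}$, which is $2^{\Omega(n)}\gg 2^{\sqrt n}$. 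Since no other point of the support lies in $A$, feasibility forces $\phi_{x^\star}\approx-\mu_{x^\star}$, and then the contribution of $x^\star$ to the objective, $(1-\epsilon)\mu_{x^\star}+\phi_{x^\star}$, is roughly $-\epsilon\mu_{x^\star}<0$. Repeating this over all choices of $j_1,\ldots,j_{\sqrt n}$ kills the objective: the $T_1$ mass cannot be paid for on any tensor support built only from the per-group block-sensitivity witness, whose vertices $\{0^{\sqrt n},e_1,\ldots,e_{\sqrt n}\}$ never have two ones in one block. Your stated obstacle (the non-product $z=0$ constraint) is real, but it is not the only one and the telescoping you hope for does not exist for these assignments.

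The missing ingredient, which the paper supplies, is a third family $T_2$ of inputs where exactly one block has \emph{two} ones and all other blocks have exactly one; it is on $T_2$ that the paper places the negative $\phi$ mass. For the problematic $A$ above, there are $\Theta(n)$ inputs of $T_2$ in $A$ (choose one of the $\sqrt n$ blocks to get a second $1$ in one of its $\Theta(\sqrt n)$ free positions), and their total $\phi$ precisely cancels $\mu_{x^\star}$, while the objective loss from $T_2$ is only a constant fraction of the $T_1$ gain. This set has no analogue in any tensor power of a block-sensitivity witness for $\mathrm{OR}_{\sqrt n}$, so it must be introduced by hand. The paper also works with a one-sided relaxation of the dual (only the $z=1$ constraint, with $\phi\le 0$), which you do not mention and which is what lets the $T_0,T_1,T_2$ bookkeeping close; in the two-sided partition dual you would also have to verify $z=0$ constraints. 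So the route is genuinely different and, as written, the proposal does not establish the lower bound.
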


\begin{proof}
We exhibit a solution to the dual of the linear program for  $\prt_\epsilon(f)$. In fact we use a one-sided relaxation of the LP for  $\prt_\epsilon(f)$, similar to the smooth rectangle bound. It is easily observed that the optimum of the LP below, denoted $\mathsf{opt}_\epsilon(f)$ is at most $\prt_\epsilon(f)$.

\vspace{0.2in}

{\footnotesize
\begin{minipage}{3in}
    \centerline{\underline{Primal}}
        \begin{align*}
      \text{min:}\quad &  \sum_{A}  w_{A} \cdot 2^{|A|} \\
       \quad &  \forall x\text{ with } f(x)=1 : \sum_{A: x \in A} w_{A} \geq 1 - \epsilon,\\
     &  \forall x\text{ with } f(x)=1 : \sum_{A: x \in A} w_{A} \leq 1 ,\\
     &  \forall x\text{ with } f(x)=0 : \sum_{A: x \in A} w_{A} \leq \epsilon ,\\
      & \forall A : w_{A} \geq 0 \enspace .
    \end{align*}
\end{minipage}
\begin{minipage}{3in}\vspace{-0.7in}
    \centerline{\underline{Dual}}
    \begin{align*}
      \text{max:}\quad &  \sum_{x:f(x)=1} (1-\epsilon) \mu_{x} -\sum_{x:f(x)=0} \epsilon \mu_x + \sum_x\phi_{x}\\
       \quad &   \forall A  : \sum_{x\in f^{-1}(1)\cap A} \mu_{x}  - \sum_{x\in f^{-1}(0)\cap A} \mu_{x} + \sum_{x \in  A} \phi_{x}  \leq 2^{|A|},\\
      & \forall x : \mu_{x} \geq 0, \phi_{x} \leq0\enspace .
    \end{align*}
\end{minipage}
}
We will work with the dual program and will assign nonzero values for $(\mu_x,\phi_x)$ on three types of inputs. Denote the set $\{(i,j):j=1,\ldots,\sqrt n\}$ by $B_i$. This is a block of inputs that feeds into a single OR. The first set of inputs has exactly one $x_{i,j}=1$ per block $B_i$. Clearly these are inputs with $f(x)=1$, and there are exactly $\sqrt n^{\sqrt n}$ such inputs. Denote the set of these inputs by $T_1$.  Then we consider a set of inputs with $f(x,y)=0$. Denote by $T_0$ the set of inputs in which all but one block $B_i$ have exactly one 1, and one block $B_i$ has no $x_{i,j}=1$. Again, there are  $\sqrt n^{\sqrt n}$ such inputs. Finally, $T_2$ contains the set of inputs, in which all $B_i$ except one have exactly one 1, and one block has two 1's. There are $(\sqrt n)^{\sqrt n}(n-\sqrt n)/2$ such inputs. Let $\delta \defeq \frac{1}{4} - 4\epsilon$ and,
\begin{eqnarray*}
\text{For all }x\in T_1 &:& \mu_x= \frac{2^{\delta n}}{\sqrt n^{\sqrt n}}; \quad \phi_x = 0,\\
\text{For all }x\in T_0&:& \mu_x= \frac{2^{\delta n}}{4\epsilon \cdot \sqrt n^{\sqrt n}}; \quad \phi_x=0,\\
\text{For all }x\in T_2&:& \phi_x= \frac{-4 \cdot 2^{\delta n}}{3 (n-\sqrt n) \sqrt n^{\sqrt n}}; \quad \mu_x=0,\\
\text{For all }x\notin T_0 \cup T_1 \cup T_2&:& \mu_x=\phi_x=0 \enspace .
\end{eqnarray*}
\begin{Claim}
$\{\mu_x,\phi_x\}$ as defined is feasible for the dual for $\mathsf{opt}_\epsilon(f)$.
\end{Claim}
\begin{proof}
Clearly $\forall x: \mu_x \geq 0, \phi_x \leq 0$.
Let $A$ be an assignment with $|A|\geq \delta n$; in this case,
\[\sum_{x\in f^{-1}(1)\cap A} \mu_{x}  - \sum_{x\in f^{-1}(0)\cap A} \mu_{x} + \sum_{x \in  A} \phi_{x}  \leq \sum_{x\in f^{-1}(1)}\mu_{x} \leq  2^{\delta n} \leq 2^{|A|}.\]
From now on $|A| <  \delta n$. Let $A$ fix at least two input positions to $1$ in a single block $B_i$. In this case clearly,
\[\sum_{x\in f^{-1}(1)\cap A} \mu_{x}  - \sum_{x\in f^{-1}(0)\cap A} \mu_{x} + \sum_{x \in  A} \phi_{x}  \leq 0 \leq 2^{|A|}.\]
Hence from now on consider $A$ which fixes at most a single input position to $1$ in each block $B_i$.
For block $i$ let  $\alpha_i$ denote the number of positions fixed to $0$ in $B_i$; let  $\beta_i\in\{0,1\}$ denote the number of positions fixed to $1$ and let $\gamma_i$ denote the number of free positions, i.e., $\sqrt n-\alpha_i-\beta_i$.

First consider the case when $k \defeq \sum_i \beta_i\leq(1-4\epsilon)\sqrt n$ and w.l.o.g. assume that the last $k$ blocks contain a 1. The number of inputs in $T_1$ consistent with $A$ is exactly $\prod_{i=1}^{\sqrt n-k} \gamma_i$. The number of inputs in $T_0$ consistent with $A$ is
 \[\sum_{i=1}^{\sqrt n-k}\prod_{j=1,\ldots,\sqrt n-k;j\neq i}\gamma_j\geq \frac{\sqrt n-k}{\sqrt n}\cdot\prod_{i=1}^{\sqrt n-k} \gamma_i\geq
 4\epsilon \prod_{i=1}^{\sqrt n-k} \gamma_i.\]
Hence,
\[\sum_{x\in f^{-1}(1)\cap A} \mu_{x}  - \sum_{x\in f^{-1}(0)\cap A} \mu_{x} + \sum_{x \in  A} \phi_{x}  \leq  \frac{2^{\delta n}}{\sqrt n^{\sqrt n}} (1 - \frac{4\epsilon}{4\epsilon} ) \prod_{i=1}^{\sqrt n-k} \gamma_i \leq 0.\]

Now assume that  $k=\sum_i \beta_i\geq(1-4\epsilon)\sqrt n$. Again w.l.o.g.~the last $k$ blocks have $\beta_i=1$. There are $\prod_{i=1}^{\sqrt n-k} \gamma_i$ inputs in $T_1\cap A$. The number of inputs in $T_2\cap A$ is at least
\[\left(\prod_{i=1}^{\sqrt n-k} \gamma_i\right)\cdot\left(\sum_{i= \sqrt n - k+1}^{\sqrt n} \gamma_i\right)\geq \left(\prod_{i=1}^{\sqrt n-k} \gamma_i\right)\cdot n(1-\delta-4\epsilon),\]
because we can choose a single $1$ for the first $\sqrt n-k$ blocks, and a second $1$ in any of the last $k$ blocks.
Hence
\begin{align*}
\sum_{x\in A\cap T_2} \phi_x & \leq -\left(\prod_{i=1}^{\sqrt n-k} \gamma_i \right) \cdot n(1-\delta-4\epsilon)\cdot \frac{4 \cdot 2^{\delta n}}{3(n-\sqrt n) \sqrt n^{\sqrt n}} \\
& = -\left(\sum_{x\in A\cap T_1} \mu_x \right) \cdot n(1-\delta-4\epsilon)\cdot \frac{4 }{3(n-\sqrt n) } \\
& \leq -\left(\sum_{x\in A\cap T_1} \mu_x \right) \cdot (1-\delta-4\epsilon) \cdot \frac{4}{3} = -\left(\sum_{x\in A\cap T_1} \mu_x \right) \enspace .
\end{align*}
Hence the constraints for dual of $\mathsf{opt}_\epsilon(f)$ are satisfied by all $A$.
\end{proof}

Finally we have,
\begin{align*}
\prt_\epsilon(f) \geq \mathsf{opt}_\epsilon(f) & \geq \sum_{x:f(x)=1} (1-\epsilon) \mu_{x} -\sum_{x:f(x)=0} \epsilon \mu_x + \sum_x\phi_{x} \\
& = 2^{\delta n}\left(1-\epsilon-\frac{\epsilon}{4\epsilon}-\frac{2}{3}\right) = 2^{\Omega(n)} \enspace .
\end{align*}
Hence our result.
\end{proof}

\subsection{Partition bound for relations}
Here we define the partition bound for query complexity for relations.
\begin{Def}[Partition Bound for relations]
Let $f \subseteq \mcX \times \mcZ$ be a relation, let $\epsilon \geq 0$. The $\epsilon$-partition bound of $f$, denoted $\prt_\epsilon(f)$, is given by the optimal value of the following linear program.

\vspace{0.2in}

{\footnotesize
\begin{minipage}{3in}
    \centerline{\underline{Primal}}\vspace{-0.2in}
    \begin{align*}
      \text{min:}\quad & \sum_{z} \sum_{A}  w_{z,A} \cdot 2^{|A|} \\
       \quad & \forall x :  \sum_{A: x \in A} \quad \sum_{z:(x,z) \in f} w_{z,A} \geq 1 - \epsilon,\\
      & \forall x : \sum_{A: x \in A} \quad \sum_{z}  w_{z,A} = 1 , \\
      & \forall z , \forall A  : w_{z,A} \geq 0 \enspace .
    \end{align*}
\end{minipage}
\begin{minipage}{3in}\vspace{-0.35in}
    \centerline{\underline{Dual}}\vspace{-0.2in}
    \begin{align*}
      \text{max:}\quad &  \sum_{x} (1-\epsilon) \mu_{x} + \phi_{x}\\
       \quad &  \forall z, \forall A : \sum_{x : x\in A; (x,z) \in f} \mu_{x}   + \sum_{x\in  A} \phi_{x}  \leq 2^{|A|},\\
      & \forall x : \mu_{x} \geq 0, \phi_{x} \in \mbR \enspace .
    \end{align*}
\end{minipage}
}
\end{Def}
As in Theorem~\ref{thm:mainquery}, we can show that partition bound is a lower bound on the randomized query complexity of $f$. Its proof is skipped since it is very similar.
\begin{Thm} Let $f \subseteq \mcX \times \mcZ$ be a relation, let $\epsilon > 0$. Then,
$\R_\epsilon(f) \geq \frac{1}{2} \log  \prt_\epsilon(f) \enspace .$
\end{Thm}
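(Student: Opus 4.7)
The plan is to mirror the proof of Part~1 of Theorem~\ref{thm:mainquery}, adapting only the correctness constraint so that it reflects that for a relation $(x,z) \in f$ replaces the equality $z = f(x)$. First I would fix a public-coin randomized query algorithm $\mcP$ attaining $\R_\epsilon(f)$, writing it as a distribution over deterministic decision trees $\mcP_r$, each of depth at most $\R_\epsilon(f)$, where $r$ is sampled with probability $q(r)$.

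Next, I would use the basic structural fact that each deterministic decision tree $\mcP_r$ induces a partition of $\{0,1\}^n$ into subcubes, one per leaf, naturally identified with assignments. Let $\mcA_r$ be the set of leaf assignments of $\mcP_r$ and $z^r_A \in \mcZ$ the output label at leaf $A$. Set
\[
w_{z,A} \defeq \sum_{r \,:\, A \in \mcA_r,\; z^r_A = z} q(r),
\]
so that $w_{z,A} \geq 0$ and, for every $x$, $\Pr[\mcP(x) = z] = \sum_{A:x\in A} w_{z,A}$. The partition property gives $\sum_{A:x\in A}\sum_z w_{z,A} = 1$, matching the second primal constraint. The correctness guarantee $\Pr[(x,\mcP(x))\in f] \geq 1-\epsilon$ translates directly into
\[
\sum_{A:x\in A}\;\sum_{z:(x,z)\in f} w_{z,A} \;\geq\; 1 - \epsilon,
\]
which is exactly the first primal constraint in the LP for $\prt_\epsilon(f)$ for relations. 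Hence $\{w_{z,A}\}$ is a feasible primal solution.

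To finish, I would bound the objective. Since every assignment $A$ appearing with $w_{z,A} > 0$ is a leaf of some tree of depth at most $\R_\epsilon(f)$, we have $|A| \leq \R_\epsilon(f)$, and $|\mcA_r| \leq 2^{\R_\epsilon(f)}$. Therefore
\[
\prt_\epsilon(f) \;\leq\; \sum_z \sum_A w_{z,A}\cdot 2^{|A|} \;\leq\; 2^{\R_\epsilon(f)} \sum_r q(r)\,|\mcA_r| \;\leq\; 2^{2\R_\epsilon(f)},
\]
which yields $\R_\epsilon(f) \geq \tfrac{1}{2}\log \prt_\epsilon(f)$. There is essentially no obstacle here since the only change from the function case is replacing the single output $f(x)$ by the set $\{z : (x,z)\in f\}$ in the correctness sum; the partition structure and the depth-to-leaf-count bound are identical.
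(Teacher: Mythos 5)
Your proposal is correct and follows essentially the same approach the paper takes for Part~1 of Theorem~\ref{thm:mainquery} (and which the paper explicitly says carries over to the relational case): construct a feasible primal solution $\{w_{z,A}\}$ by aggregating leaf subcubes of the deterministic decision trees in the convex combination, verify the partition and correctness constraints, and bound the objective via $|A|\leq \R_\epsilon(f)$ together with $|\mcA_r|\leq 2^{\R_\epsilon(f)}$. The only change needed for relations, replacing $f(x)$ by the set $\{z:(x,z)\in f\}$ in the correctness constraint, is exactly what you do.
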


\subsection{Separations between bounds}
In this section we discuss separation between some of the bounds mentioned. 
\begin{Thm}
\begin{enumerate}
\item $\log \prt_\epsilon(\tribes)=\Omega(n)$, while $\C(\tribes), \cadv(\tribes),\adv(\tribes),\tdeg(\tribes)=O(\sqrt n)$.
\item There is a function $f:\{0,1\}^n\to\{0,1\}$ such that $\log \prt_\epsilon(f)\geq\Omega(\bs(f))=\Omega(n)$, while $\edeg(f)=O(n^{0.62})$ for all $\epsilon<1/2$.
\end{enumerate} \label{thm:sepq}
\end{Thm}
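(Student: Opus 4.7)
The theorem splits into two independent claims; the first is essentially a summary of what has already been proved, while the second requires a single additional external ingredient. I would handle them in turn.

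For the first claim, the lower bound $\log \prt_\epsilon(\tribes) = \Omega(n)$ is exactly the content of Theorem~\ref{thm:tribes}, so nothing new is needed there. The four upper bounds are each routine. For certificate complexity, a $1$-certificate names one true variable in each of the $\sqrt n$ blocks and a $0$-certificate fixes all $\sqrt n$ variables of some all-zero block, giving $\C(\tribes) \leq \sqrt n$. The classical adversary bound satisfies $\cadv(f) = O(\C(f))$ for total functions (noted after the definition of $\cadv$), so $\cadv(\tribes) = O(\sqrt n)$. Both $\adv(\tribes)$ and $\tdeg(\tribes)$ are lower bounds (up to constants) on the quantum query complexity $Q(\tribes)$, and $Q(\tribes) = O(\sqrt n)$ by a nested application of Grover search (outer search for a $0$-block, inner search for a $1$ inside each block), yielding $\adv(\tribes), \tdeg(\tribes) = O(\sqrt n)$.

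For the second claim, I would reuse the Nisan--Wigderson function $f$ from~\cite{nisan&wigderson:rank} already invoked in Theorem~\ref{thm:sep}(2). It is obtained by recursive composition of a small Boolean gadget and is known to satisfy $\bs(f) = \Omega(n)$ while $\edeg(f) = O(n^{\log_6 3}) = O(n^{0.62})$. Given such an $f$, the lower bound on the partition bound is immediate from Theorem~\ref{thm:mainquery}(3): for any constant $\epsilon < 1/2$,
\[
\log \prt_\epsilon(f) \;\geq\; \Omega(\bs(f)) \;=\; \Omega(n),
\]
after absorbing the additive $\log\epsilon - 2$ term into the constant. The upper bound $\edeg(f) = O(n^{0.62})$ is a property of the construction itself.

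The main obstacle, if any, is purely bibliographic: one has to know and cite the right separating function for part~2, and one has to invoke appropriate standard facts (certificate bound, quantum--classical connections, Grover) for the four upper bounds in part~1. There is no technical content beyond what is already assembled in Theorem~\ref{thm:tribes} and Theorem~\ref{thm:mainquery}, so I do not anticipate any real difficulty; the whole proof is essentially an exercise in plugging known estimates into our previously established reductions.
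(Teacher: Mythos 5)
Your proposal is correct and follows the same route as the paper: for part~1 it invokes Theorem~\ref{thm:tribes} for the lower bound, the obvious certificate argument for $\C$ and hence $\cadv$, and the $O(\sqrt n)$ quantum query algorithm (nested Grover) to bound $\adv$ and $\tdeg$; for part~2 it points to the Kushilevitz function from \cite{nisan&wigderson:rank} with $\bs=\Omega(n)$ and $\edeg=O(n^{\log_6 3})$, then applies Theorem~\ref{thm:mainquery}(3). You actually spell out a bit more detail than the paper does (the paper cites both \cite{nisan&szegedy:degree,nisan&wigderson:rank} for part~2 without further argument), but the approach is identical.
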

\begin{proof}
\begin{enumerate}
\item The lower bound $\log \prt_\epsilon(\tribes)=\Omega(n)$ is shown in Theorem~\ref{thm:tribes}. The upper bound on $\C(\tribes)$ is obvious, and implies the bound on $\cadv$. The remaining bound follow from the existence of efficient quantum query algorithms for the problem.
\item Examples of such functions are given in \cite{nisan&szegedy:degree,nisan&wigderson:rank} with the best construction attributed to Kushilevitz in the latter paper.
\end{enumerate}
\end{proof}

In the first result above the partition bound with error beats all of the "standard" lower bound methods for randomized query complexity (as well as $\C$). In the second result the partition bound is better than the exact degree. By composing $\tribes$ with the function $f$ above we can also get a function for which $\log\prt_\epsilon$ is polynomially larger than $\C$ and $\edeg$ simultaneously.

\subsection{Boosting}
We remark, without proof, that the error in the partition bound (both communication and query) and its relatives can in general be boosted down in the same way as the error for randomized protocols, for example we have: For all relations $f$: $\log\prt_{2^{-k}}(f)=O(k\cdot \log \prt_{1/3}(f))$.

\section{Open Questions} Here we state some of the questions left open.

\vspace{0.2in}

\noindent {\bf Communication Complexity}
\begin{enumerate}
\item Is $\R_{1/3}(f) = \poly(\log \prt_{1/3}(f))$ for all relations $f$?
\item Is $\prt_{1/3}(\tribes) = \Omega(n)$ ?
\end{enumerate}

\noindent {\bf Query Complexity}
\begin{enumerate}\item  Is $\R_{1/3}(f) = O(\log^2 \prt_{1/3}(f))$ or better still is $\R_{1/3}(f) = O(\log \prt_{1/3}(f))$ ? 
\item Is $ \adv(f) = O(\log \prt_{1/3}(f))$ ?
\item    Is  $\edeg(f) = \widetilde{O}(\prt_0(f))$ ?
\end{enumerate}

\subsection*{Acknowledgment} We thank Shengyu Zhang for helpful discussions. The work done is supported by the internal grants of the Centre for Quantum Technologies, Singapore.

\newcommand{\etalchar}[1]{$^{#1}$}

\end{document}